\newtheorem{lemma}{Lemma}
\numberwithin{equation}{section}
\newcommand{\doublewidetilde}[1]{{%
  \mathpalette\double@widetilde{#1}%
}}
\newcommand{\double@widetilde}[2]{%
  \sbox\z@{$\m@th#1\widetilde{#2}$}%
  \ht\z@=.9\ht\z@
  \widetilde{\box\z@}%
}
\title{\boldmath Statistical features of quantum chaos using the Krylov operator complexity \unboldmath}
\author{Zhuoran Li}
\author[1]{and Wei Fan~\note{Corresponding author. (The administrative policy of our university requires us to put students' name in the first place, so we did not follow the convention of arranging the names in the alphabet order. We apologize for this. ) }}
\affiliation{Department of Physics, School of Science, Jiangsu University of Science and Technology,
Zhenjiang, 212100, China}
\emailAdd{zrl2lzr@gmail.com, fanwei@just.edu.cn}
\abstract{
We study the statistical properties of Lanczos coefficients over an ensemble of random initial operators generating the Krylov space. We propose two statistical quantities that are important in characterizing the complexity: the average correlation matrix $\langle x_{i} x_{j}\rangle$ of Lanczos coefficients and the resulting distribution of the variance of Lanczos coefficients. Their resulting statistics are  the Wishart distribution and the (rescaled) chi-square distribution respectively, which are independent of the distributions of initial operators and become the normal distribution in the case of large matrix size. As a numerical example, we use the typical billiard system  with an integrability-breaking term and choose samples of random initial operators from given probability distributions (GOE, GUE and the uniform distribution).  It agrees with the phenomenological analysis and further interesting behaviors are obtained, which  indicates a consistent connection between RMT, Anderson localization and Krylov complexity. 
}
\begin{document}

\maketitle
\flushbottom

\section{Introduction}
\label{sec:intro}

The physics of complexity is important for understanding many different phenomena, ranging from condensed matter physics to black hole physics. Especially, the operator complexity captures  the spread of an operator under time evolution and is an important indicator in the analysis of chaos.  Recently the spread of 
an operator is explored in the Krylov space~\cite{Parker:2018yvk}, where the dynamics depends on the expansion coefficients  of the operator in the Krylov basis and on the Lanczos coefficients  generated in the process of constructing this Krylov basis. There the Krylov complexity (K-complexity) is constructed  from the expansion coefficients and  is proposed as an upper bound on the operator complexity, including the famous bound $\lambda_L \leq 2 \pi k_B T / \hbar$~\cite{Maldacena:2015waa} of the growth of the out-of-time-order correlation function  (OTOC)~\cite{1969JETP28:1200L}. 
%% 简短简介Krylov相关的一些著名方面, 这种某个模型中的Krylov，或者Krylov的某个方面的性质，就只提第一个开始研究的文章即可。%%【】提一下Toda chain的工作。

Interestingly, the operator dynamics in the Krylov space resembles the hopping dynamics on a 1d chain~\cite{Parker:2018yvk}, called the Krylov chain, with the hopping amplitudes being the Lanczos coefficients. Using this analog, the erratic behavior of Lanczos coefficients is analyzed~\cite{Trigueros:2021rwj,Rabinovici:2021qqt} in terms of the Anderson localization. The variance $\sigma^2$ of  Lanczos coefficients is introduced~\cite{Rabinovici:2021qqt} to explain the suppression of the late-time saturation value of K-complexity in the integrable XXZ spin chain.  This suggests that the variance $\sigma^2$ of  Lanczos coefficients may play an important role in distinguishing chaotic and nonchaotic behaviors. 

Then it leads to an interesting connection between K-complexity and Random Matrix Theory (RMT). By adding an integrability-breaking term to the XXZ spin chain,  the late-time saturation value of K-complexity is computed as the system changes from nonchaotic to chaotic behavior~\cite{Rabinovici:2022beu}. And it approaches the behavior of similar values computed with Hamiltonian drawn from RMT ensembles. Also, three different initial operators generating the Krylov space is randomly selected and compared there. In~\cite{Craps:2024suj}, the K-complexity is also studied by choosing multiple initial operators. 

Furthermore, a statistical correlation between the variance $\sigma^2$ of  Lanczos coefficients and the level statistics~\cite{Dyson:1962JMP,Gutzwiller:1971fy,Berry:1977RSPSA,Bohigas:1983er,Blumel:1990zz}  is reported in the billiard systems~\cite{Hashimoto:2023swv}. As the system Hamiltonian varies from nonchaotic to chaotic, the variance $\sigma^2$ of  Lanczos coefficients is computed for an initial momentum operator and the level statistics are computed as usual. The linear correlation coefficient between these two sequences of data are not zero. 

%%通过TBM引入的variance重要性，随机挑选算符. 
%%% 分析一些相关的研究. 【】如何恰当地引用手动选算符的文章

Along this line, an important question arises: \textit{ What would happen for the Lanczos coefficients if we were to perform statistics of initial operators over a probability distribution}. In this paper, we study this possibility by proposing two statistical quantities that have a deep connection with Anderson localization and RMT: the average correlation matrix $\langle x_{i} x_{j}\rangle$ of Lanczos coefficients and the resulting distribution of the variance $\sigma^2$. We qualitatively analyze their statistical properties based on RMT and on existing results in Krylov complexity. The average correlation matrix $\langle x_{i} x_{j}\rangle$ satisfies the Wishart distribution and the resulting distribution of $\sigma^2$ satisfies a rescaled  chi-square  distribution. This is independent of the initial operator distributions and approaches the normal distribution as the size of the operator matrix increases. For the purpose of characterizing operator complexity, the behavior of two quantities should be consistent with each other because of their deep connection with Anderson localization and RMT.  

As a numerical example, we use two typical billiard systems originating from a quantization of classical chaos: the Sinai billiard~\cite{BERRY1981163} and the Stadium billiard~\cite{benettin1978numerical,McDonald:1979zz,Casati:1980ytd}, equipped with an integrability-breaking term. The numerical results indeed verify our qualitative analysis and indicate the consistency between them, which implies an interesting connection among RMT, Anderson localization and Krylov complexity. Furthermore, for the Sinai billiard varing from nonchaotic to chaotic, these two quantities have distinct patterns that can capture this transition. In the nonchaotic regime the correlation matrix $\langle x_{i} x_{j}\rangle$ has a pattern that locally resembles the two-electron wave function of the Anderson localization~\cite{Weinmann1995h2eOF}, which disappears (or fading away by spreading out uniformly) in the chaotic regime. And the resulting distributions of $\sigma^2$ are nearly the same in the nonchaotic case, but split into two well-separated groups in the chaotic case. However, there is no such distinction for the Stadium billiard. In both examples, the behavior of these two quantities are consistent with each other.

For details of the numerical example, it is done in the energy representation and operators become matrices. Each time, a random initial operator $\mathcal{O}$ is sampled  out by picking the corresponding random matrix from the following distributions: the GOE,  the GUE,  the uniform distribution of complex numbers (UCP), the uniform distribution of pure real numbers (URE) and  of  pure imaginary numbers (UIM). Then the Lanczos coefficients are computed in the Krylov space associated with this operator, which give the correlation matrix $x_{i} x_{j}$ and the variance $\sigma^2$ of  Lanczos coefficients. Repeating this enough times,  
the average correlation matrix $\langle x_{i} x_{j}\rangle$ and the distribution of the variance $\sigma^2$ can be extracted for different samples of initial operators. Finally, these two statistical quantities are computed as the system changes from nonchaotic to chaotic.

The paper is organized as follows. In Section~\ref{sec:algorithm}, we briefly review the  relevant physics of the Krylov operator space. In Section~\ref{sec:statistics-property}, we  qualitatively analyze the statistical properties of $\langle x_{i} x_{j}\rangle$ and $\sigma^2$, and obtain the Wishart distribution  and the rescaled chi-square distribution respectively.  They approach  the normal distribution when the size of matrix is large enough. In Section~\ref{sec:results}, we provide numerical examples for the Sinai billiard and the Stadium billiard. The Wishart distribution  and the rescaled chi-square distribution are verified and more interesting behaviors are obtained as the system changes from nonchaotic to chaotic.   In Section~\ref{sec:discuss}, we conclude with a discussion of open questions.

\section{Preliminary: Krylov operator space}
\label{sec:algorithm}

Here we briefly review the algorithms of Krylov operator space~\cite{Parker:2018yvk}. Readers familiar with it can skip this section. For a quantum system with the Hamiltonian $H$, the time evolution of a given operator $\mathcal{O}$ in the Heisenberg picture is given by
\begin{align}
    \mathcal{O}(t)=e^{iHt}\mathcal{O}(0)e^{-iHt}
\end{align}
Using the Baker-Campbell-Hausdorff formula, it can be expanded as
\begin{align}
    \mathcal{O}(t)=\sum_{n=0}^\infty\frac{(it)^n}{n!}\mathcal{L}^n \mathcal{O}(0),
\end{align}
where $\mathcal{L} \coloneq [H, \cdot ]$ is the Liouvillian superoperator. 
This  sequence of operators $\mathcal{O}(0)$, $\mathcal{L}\mathcal{O}(0)$, $\mathcal{L}^2\mathcal{O}(0)$, $\cdots $ spans a space associated with the operator $\mathcal{O}(t)$. This space is the Krylov subspace generated by $\mathcal{L}$ (and hence $H$) and $\mathcal{O}(0)$, that is, %%【】到底是叫Krylov space还是subspace
\begin{align}
    \mathcal{K}_m(\mathcal{L},\mathcal{O}(0))\equiv \left \{  \mathcal{O}(0),\mathcal{L}\mathcal{O}(0),\mathcal{L}^2\mathcal{O}(0),\cdots ,\mathcal{L}^m\mathcal{O}(0)\right \}
\end{align}
An orthonormal basis, called the Krylov basis, can be constructed for $\mathcal{K}_m(\mathcal{L},\mathcal{O}(0))$. This construction  also generates a sequence of important nubmers called the Lanczos coefficients, because the algorithm used is the Lanczos algorithm( or similar algorithms). The orthogonality is guaranteed by the inner product  $(A|B)=\operatorname{tr}[A^\dagger B]$, which can be considered as a simplification of the Wightman inner product at infinite temperature. 
The  Lanczos algorithm can be briefly stated  as follows:
\begin{algorithmic}[1]
\Require  $\|\mathcal{O}\| = \sqrt{(\mathcal{O}|\mathcal{O})}$
\State $b_0\gets 0$
\State $\mathcal{O}_{-1}\gets 0$
\State $\mathcal{O}_0\gets \mathcal{O}(0)/\|\mathcal{O}(0)\|$
\State $n\gets 1$
\While{$b_n \neq 0$}
\State $\mathcal{A}_n \gets \mathcal{L}\mathcal{O}_{n-1}-b_{n-1}\mathcal{O}_{n-2}$
\State $b_n \gets \|\mathcal{A}_n\|$
\State $\mathcal{O}_n \gets  \mathcal{A}_n/b_n$
\State $n\gets n+1$
\EndWhile
\end{algorithmic}
% \begin{enumerate}
%     \item $b_0\equiv0 ,\quad\mathcal{O}_{-1}\equiv0$
%     \item $\mathcal{O}_0\equiv\mathcal{O}(0)/\|\mathcal{O}(0)\|$ where,  $\|\mathcal{O}\|\equiv\sqrt{(\mathcal{O}|\mathcal{O})}$
%     \item For $n\ge1\colon\mathcal{A}_n=\mathcal{L}\mathcal{O}_{n-1}-b_{n-1}\mathcal{O}_{n-2}$
%     \item Set $b_n=\|\mathcal{A}_n\|$
%     \item If $b_n=0$ stop; otherwise set $\mathcal{O}_n=\mathcal{A}_n/b_n$ and go to step 3.
% \end{enumerate}
The Krylov basis is given by $\left \{ \left | O_0  \right ) ,\left | O_1   \right )\cdots \left | O_{n_k-1}   \right ) \right \} $ and the dimension of Krylov operator space is $n_k=dim(\mathcal{K}_m)$. The sequence of positive numbers $\left \{ b_n \right \} $ is called the Lanczos coefficient and is essentially the matrix element of the Liouvillian operator
 \begin{align}
     L_{nm}\equiv (\mathcal{O}_n|\mathcal{L}|\mathcal{O}_m)=\begin{pmatrix}0&b_1&0&0&\cdots\\b_1&0&b_2&0&\cdots\\0&b_2&0&b_3&\cdots\\0&0&b_3&0&\ddots\\\vdots&\vdots&\vdots&\ddots&\ddots\end{pmatrix}
 \end{align}

The Lanczos coefficient $\left \{ b_n \right \} $ is essential for  the evolution of the operator $\mathcal{O}(t)$. To see this, firstly expand the operator $\mathcal{O}(t)$  in terms of the Krylov basis 
\begin{align}
\label{eq:expended k basis}
    \mathcal{O}(t)=\sum_{n=0}^{n_k-1}i^n\varphi_n(t)\mathcal{O}_n,\quad \varphi_n(t)\equiv (\mathcal{O}_n |\mathcal{O}(t))/i^n,
\end{align}
and then substitute it into the Heisenberg equation, which leads to the following equation
\begin{align}
\label{eq:dyn-eqn-coeff}
    \dot{\varphi}_n(t)=b_n\varphi_{n-1}(t)-b_{n+1}\varphi_{n+1}(t),
\end{align}
with the dot being the time derivative and the initial condition being $\varphi_n(0)=\delta_{n0}\|O\|$. 
This equation suggests that the spreading behavior of an operator in the Krylov space resembles the dynamics of wave functions on a 1d chain, called the Krylov chain~\cite{Parker:2018yvk}, where $\varphi_n(t)$ is  the wave function and $b_n$ is the hopping amplitude. This makes it possible to connect the operator complexity in the Krylov space  with condensed matter physics on spin chains. The tridiagonal form of the Liouvillian operator is another feature of the Krylov chain. 
On this Krylov chain, the time-dependent average position acts naturally as an indicator of the spreading behavior and is defined as the K-complexity
\begin{align}
C_K(t)=\sum_{n=0}^{n_k-1} n\left|\phi_n(t)\right|^2.
\end{align}
The K-complexity has been playing an important role in the analysis of quantum chaos, see the review~\cite{Nandy:2024htc} for related works. 

Obviously, results in the Krylov space will vary when the initial operator $\mathcal{O}$ changes. One way to deal with this dependence on initial operators is to do statistics by sampling random initial operators from different probability distributions. We will focus on this approach in the remaining of this paper.

\section{Statistical properties of Lanczos coefficients}
\label{sec:statistics-property}

The perspective of the Krylov chain~\eqref{eq:dyn-eqn-coeff} provides an efficient way to analyze the spreading dynamics of an operator by connecting to the physics of quantum many-body systems. Firstly in~\cite{Dymarsky:2019elm}, the delocalization on the chain is proposed as the reason for the chaotic behavior of an operator on the many-body systems. Then in~\cite{Rabinovici:2021qqt}, the localization on the Krylov chain is proposed as the reason for the nonchaotic behavior of an operator on the XXZ spin chain. 

Here, the Lanczos coefficients plays an an essential role because it is the hopping amplitude on the Krylov chain. The sequence of Lanczos coefficients is more erratic in the nonchaotic case and less erratic in the chaotic case. This behavior is  explored~\cite{Rabinovici:2021qqt,Trigueros:2021rwj} on the Krylov chain as a phenomenon of the Anderson localization. The energy eigenstate $\mid \omega)$ on the Krylov chain can be decomposed in the Krylov basis as 
\begin{align}
\mid \omega)=\sum_{n=0}^{n_k-1} \psi_n \mid \mathcal{O}_n)
\end{align}
There is a zero frequency state whose solution  is given by~\cite{Rabinovici:2021qqt,Trigueros:2021rwj}
\begin{align}
\frac{\psi_{2 n}}{\psi_0}=(-1)^n \prod_{i=1}^n \frac{b_{2 i-1}}{b_{2 i}}, \quad n=0, \ldots, \frac{n_k-1}{2}.
\end{align}
The extent of localization can be extracted from this zero frequency state and  the disordered cofficients $b_n$~\cite{fleishman1977fluctuations}. The localization length is inversely proportional to  the  variance of the Lanczos coefficients~\cite{Rabinovici:2021qqt},  defined by
\begin{align}
\label{eq:variance}
    \sigma^2 \coloneqq \text{Var} (x_i), \, x_i = \ln{\left| \frac{b_{2i-1}}{b_{2i}}\right| }.
\end{align}
Intuitively, the variance $\sigma^2$  measures the magnitude of the erratic behavior of $b_n$. A smaller variance $\sigma^2$ means less erratic behavior of Lanczos coefficients and this gives a larger localization length, so the Krylov chain is more delocalized and this happens when the system is more chaotic. 
Results in~\cite{Rabinovici:2021qqt} show that the variance $\sigma^2$ is indeed smaller in the chaotic case. 

Note that the sequence $x_i$ and the variance $\sigma^2$ of Lanczos coefficients are  computed in the Krylov space $\mathcal{K}_m(\mathcal{L},\mathcal{O})$ of an initial operator $\mathcal{O}$.  In this paper, we study statistics over random initial operators, in order to deal with this dependence on initial operators. Obviously, it is crucial to analyze the resulting statistics of the sequence $x_i$ and the variance  $\sigma^2$. In the following, the $\langle \cdots \rangle$ means the ensemble average over a sample of initial operators. %%%这一段超级难写

To be specific, We focus on the statistics of the more interesting correlation matrix~\footnote{In statistics, this is the scatter matrix to estimate the covariance matrix of a distribution. Since we are not doing pure mathematics here, we prefer to use the more physical term 'correlation matrix'~\cite{Eynard:2015aea}.}  $x_i x_j$, and  the resulting distribution $f_{\sigma^2}$ of the variance $\sigma^2$ across a sample. 
Naively, the resulting statistics of  $x_i x_j$ and $f_{\sigma^2}$ depend on the  distributions of the initial operator $\mathcal{O}$. However, we find that their statistics are the Wishart distribution~\cite{wishart1928generalised} and  the rescaled chi-square distribution respectively after a qualitatively analysis, which are independent of the distribution of initial operators. As the  matrix size of the operator increases, they approach the normal distribution. In the remaining of this section, we firstly explain the statistics in a purely intuitive manner. Then we explain the statistics in a phenomenological manner.

\subsection{Intuition on the normal distribution}
\label{sec:normal-dist}

Firstly we give an intuitive explanation of the reason why the resulting distributions are normal distributions when the size of matrix is large. This is related with the nature of the Krylov algorithm itself and can be understood using the central limit theorem of probability theory. 

Intuitively, the matrices involved in the Krylov algorithm depend only on the energy levels and the initial random matrix. This  can be seen by  the following lemma proved by mathematical induction. 
\begin{lemma}
\label{lemma1}
Matrix elements $\mathcal{O}^{(k)}_{mn}$ of Krylov basis $\mathcal{O}_{k}$ in the energy representation can be written as: 
\begin{align}
    \mathcal{O}^{(k)}_{mn}=C^{(k)}_{mn} (b_1,b_2,\cdots ,b_{k};E_{mn}) O_{mn}^{(0)}
\end{align}
where  $E_{mn} \coloneqq E_m-E_n$.
\end{lemma}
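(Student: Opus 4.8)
The plan is to prove the statement by induction on the Krylov index $k$, with the whole argument resting on one structural fact: in the energy eigenbasis the Hamiltonian is diagonal, $H_{mn}=E_m\delta_{mn}$, so the Liouvillian superoperator acts on matrix elements purely multiplicatively. Concretely, for any operator $A$ with energy-basis matrix elements $A_{mn}$,
\begin{align}
(\mathcal{L}A)_{mn}=[H,A]_{mn}=E_m A_{mn}-A_{mn}E_n=E_{mn}A_{mn}.
\end{align}
This is the engine of the proof: because $\mathcal{L}$ never mixes different matrix indices $(m,n)$, every step of the Lanczos algorithm (multiplying by $\mathcal{L}$, subtracting scalar multiples of earlier basis operators, and dividing by the norm $b_k$) acts independently within each fixed slot $(m,n)$, so a common initial factor can be carried along unchanged.

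For the base cases I would record $\mathcal{O}^{(0)}_{mn}=O^{(0)}_{mn}$, i.e.\ $C^{(0)}_{mn}=1$, and, using the conventions $b_0=0$, $\mathcal{O}_{-1}=0$, compute the first step $\mathcal{A}_1=\mathcal{L}\mathcal{O}_0$ to get $\mathcal{O}^{(1)}_{mn}=(E_{mn}/b_1)\,O^{(0)}_{mn}$, so that $C^{(1)}_{mn}=E_{mn}/b_1$ depends only on $b_1$ and $E_{mn}$ as claimed.

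For the inductive step, assume the claimed form holds for indices $k-1$ and $k-2$. Feeding this into the recursion $\mathcal{A}_k=\mathcal{L}\mathcal{O}_{k-1}-b_{k-1}\mathcal{O}_{k-2}$ and using the multiplicative identity above, the common factor $O^{(0)}_{mn}$ pulls out of every term, giving $\mathcal{O}^{(k)}_{mn}=C^{(k)}_{mn}O^{(0)}_{mn}$ with
\begin{align}
C^{(k)}_{mn}=\frac{1}{b_k}\Big(E_{mn}\,C^{(k-1)}_{mn}-b_{k-1}\,C^{(k-2)}_{mn}\Big).
\end{align}
Since the right-hand side involves only $E_{mn}$ and $b_1,\dots,b_k$, the coefficient $C^{(k)}_{mn}$ is a function of exactly the advertised arguments, closing the induction. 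As a bonus one reads off that $C^{(k)}_{mn}$ is a degree-$k$ polynomial in $E_{mn}$ whose coefficients are built from the $b$'s — the familiar Lanczos/orthogonal-polynomial structure — though this sharper statement is not needed here.

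I do not expect a genuine obstacle; the argument is a clean induction once the diagonal action of $\mathcal{L}$ is isolated. The only points requiring care are the boundary conventions $b_0=0$ and $\mathcal{O}_{-1}=0$ in the $k=1$ step, and the observation that the $b_k$ are themselves functionals of the initial operator — so the lemma does not claim independence from $\mathcal{O}$, only that the entire dependence is funnelled into the overall prefactor $O^{(0)}_{mn}$ together with the scalar parameters $b_k$. Emphasising this factorisation is exactly what is needed downstream, since it is the fixed, matrix-element-wise prefactor $O^{(0)}_{mn}$ that later licenses the central-limit-theorem reasoning for $\langle x_i x_j\rangle$ and $\sigma^2$.
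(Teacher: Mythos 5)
Your proposal is correct and follows essentially the same route as the paper: a two-step induction on $k$ using the Lanczos recursion $\mathcal{A}_k=\mathcal{L}\mathcal{O}_{k-1}-b_{k-1}\mathcal{O}_{k-2}$, with the key input that $\mathcal{L}$ acts diagonally on matrix elements in the energy basis, $(\mathcal{L}A)_{mn}=E_{mn}A_{mn}$, so the factor $O^{(0)}_{mn}$ is carried through each step. You merely make explicit the multiplicative action of $\mathcal{L}$ and the recursion for $C^{(k)}_{mn}$, which the paper leaves implicit.
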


\begin{proof}
$k=0,1 $ are clearly true, because
\begin{align}
    &\mathcal{O}^{(0)}_{mn}=\mathcal{O}^{(0)}_{mn}\\
    &\mathcal{O}^{(1)}_{mn}=\frac{1}{b_1} \mathcal{A}_{mn}^{(1)} = \frac{1}{b_1} (E_{mn} \mathcal{O}^{(0)}_{mn}), b_1=\|E_{mn} \mathcal{O}^{(0)}_{mn}\|
\end{align}
Assume that $k-1, k-2$ holds, then $k$ also holds, because
\begin{align}
    \mathcal{A}_{mn}^{(k)} &=E_{mn} \mathcal{O}^{(k-1)}_{mn} -b_{k-1}\mathcal{O}^{(k-2)}_{mn}\\
    &=[E_{mn} C^{(k-1)}_{mn} (b_1,b_2,\cdots ,b_{k-1};E_{mn})-b_{n-1}C^{(k-2)}_{mn} (b_1,b_2,\cdots ,b_{k-2};E_{mn}) ]  O_{mn}^{(0)}\\
    &=\tilde{C}^{(k)}_{mn} (b_1,b_2,\cdots ,b_{k-1};E_{mn}) O_{mn}^{(0)}
\end{align}
and
\begin{align}
\label{eq:b_O_statistics}
    b_k &= \|\tilde{C}^{(k)}_{mn} (b_1,b_2,\cdots ,b_{k-1};E_{mn}) O_{mn}^{(0)}\|\\
    \mathcal{O}^{(k)}_{mn}&=\frac{1}{b_k} \mathcal{A}_{mn}^{(k)}=C^{(k)}_{mn} (b_1,b_2,\cdots ,b_{k};E_{mn}) O_{mn}^{(0)}.
\end{align}
Then by mathematical induction the lemma is proved.
% Therefore by mathematical induction, we prove that the Lanczos coefficient can be written as follows:
% \begin{align}
%     b_k^2= \|\mathcal{A}_k\|^2= \sum_{mn}\tilde{C}^{(k)2}_{mn} (b_1,b_2,\cdots ,b_{k-1};E_{mn})\left |O_{mn}^{(0)}\right | ^2.   
% \end{align}
\end{proof}
%补充两个证明：1.两个正态分布的乘积，当其中一个的均值远大于另一个时，服从正态分布。2.当正态分布的均值很大时，这个正态分布开根号在中心附近近似服从正态分布

% \begin{align} b_n^2 & = \left | \mathcal{A}_{mn}^{(k)} \right |^2 \\& = \sum_{mn} (E_{mn} \mathcal{O}^{(k-1)}_{mn})^2  -2b_{k-1}\sum _{mn}(E_{mn} \mathcal{O}^{(k-1)}_{mn}\mathcal{O}^{(k-2)}_{mn})^2+b_{k-1}^2\sum _{mn}(\mathcal{O}^{(k-2)}_{mn})^2\end{align}

From this lemma, we see that all  Lanczos coefficients $b_k= b_k(O_{mn}^{(0)};E_{mn})$  depend only on the energy levels and the initial random matrix $O_{mn}^{(0)}$. In practical computation, the energy levels are truncated to $\mathcal{N}_{max}$ and the random matrix $\mathcal{O}^{(0)}_{mn}$ is of size $\mathcal{N}_{max}\times\mathcal{N}_{max}$. If the value of $\mathcal{N}_{max}$ is large, there would be tremendous random numbers.  Intuitively, we would expect that the central limit theorem comes into play when we have large-enough random numbers, and the resulting distribution can be estimated by the normal distribution. On the other hand, we can intuitively expect that  most information of the system is preserved with a large enough $\mathcal{N}_{max}$. We know that a chaotic system resembles the behavior of RMT from level statistics. Then a more accurate approximation to the system means more randomness are preserved in this case. So things are random enough such that we can use the central limit theorem to expect the normal distribution.

  % $ \mathcal{O}^{(0)}_{mn}$ and $ \mathcal{O}^{(1)}_{mn}$ follow a normal distribution, for most $mn$, $b_k\gg \mathcal{O}_{mn}$, By Lemma2 $b_k \mathcal{O}_{mn}$ obey normal distribution. Each item in bn follows a different normal distribution, so when N is large enough, bn converges to a normal distribution. 

\subsection{Distribution of the average correlation matrix $\langle x_i x_j \rangle$}
\label{sec:correlatoin-length}

% \textbf{standard estimator of RMT}

Now we qualitatively analyze the statistics of the average correlation matrix $\langle x_i x_j \rangle$. Suppose  $K$ continuous Lanczos coefficients are  selected from the Krylov space of a given initial operator, with $K$ being the algorithmic steps.  We shall take the approximation that the sequence $\vec{X}=(x_1,\ldots,x_K)$ is a $K$-component random vector. This is easily understood in the Krylov chain, where by analog with the Anderson localization the Lanczos coefficients are disordered.  Without the Krylov chain, it is indeed harder to accept this. Naively looking at the Lanczos algorithm, the Lanczos coefficients seem to be already predetermined by the initial operator, like an initial value problem of ODE. However, we can intuitively get the randomness from Lemma~\ref{lemma1}. The initial operator is already a random matrix and all Lanczos coefficients are generated from a complicated pipeline of arithmetic of energy levels with the random matrix. And the energy levels themselves are closely related with RMT in the case of chaos. So we can still accept the assumption that $\vec{X}$ is a random vector even without the help of the Krylov chain. After all, studies in the literature  have shown the erratic behavior of Lanczos coefficients in various kinds of systems, and  this assumption indeed leads to a successful interpretation of the results in this paper. 

After accepting the randomness of $x_i$, we go one step further by assuming that the $K$-component random vector $\vec{X}=(x_1,\ldots,x_K)$ is from a multivariate normal distribution $N_K(0, \Sigma)$ with zero mean and a covariance matrix $\Sigma$ of size $K$. The zero mean is easily understood again in the Krylov chain by the Anderson localization, where the disordered $b_n$ imply that $x_i$ are fluctuating around zero. The property of being normal can be intuitively understood from the analysis in the previous subsection~\ref{sec:normal-dist}. Previous numerical results also show this property~\cite{Trigueros:2021rwj,Rabinovici:2022beu}. 

With this assumption, the average correlation matrix $\langle x_i x_j \rangle$ is exactly the common estimator used in RMT~\cite{Eynard:2015aea}. For a sample of $N$ random initial operators, we get $N$ random vectors that form a matrix of size $N\times K$
\begin{align}
X=\left(\vec{X}_1, \vec{X}_2, \cdots, \vec{X}_N\right)^{\mathrm{T}}, \quad \vec{X}_i=\left(x^{(i)}_1, x^{(i)}_2, \cdots, x^{(i)}_K\right), \quad x^{(i)}_j \in \mathbb{R},
\end{align}
where the $\vec{X}_i$ are independent, identically distributed random vectors from $N_K(0, \Sigma)$. 
Then the sample covariance matrix $M$ is a symmetric square matrix of size $K$ defined as
\begin{align}
M=X^{\mathrm{T}} X = \sum_{i=1}^{N} \vec{X}^{\mathrm{T}}_i \vec{X}_i \sim W_K(\Sigma, N),
\end{align}
which has the probability distribution  $W_K(\Sigma, N)$ known as the Wishart distribution. 
The empirical sample covariance matrix $M$ is the most commonly used estimator for the covariance matrix $\Sigma$. 
If $\Sigma$ is given, the random vector $\vec{X}_i$ has the distribution 
\begin{align}
N_K(0, \Sigma)\left(\vec{X}_i\right)=\frac{e^{-\frac{1}{2} \operatorname{Tr} \vec{X}_i \Sigma^{-1} \vec{X}_i^{\mathrm{T}}}}{(2 \pi)^{K / 2} \sqrt{\operatorname{det} \Sigma}} \prod_{j=1}^K d x^{(i)}_j
\end{align}
the Wishart distribution is defined as
\begin{align}
P(M) =\frac{1}{2^{N K/2}(\operatorname{det} \Sigma)^{N/2} \Gamma_K\left(\frac{N}{2}\right)}(\operatorname{det} M)^{\frac{N-K-1}{2}} e^{-\frac{1}{2} \operatorname{Tr} \Sigma^{-1} M}, 
\end{align}
where $\Gamma_K$ is a multivariate generalization of the Gamma function
\begin{align}
\Gamma_K\left(\frac{N}{2}\right)=\pi^{K(K-1) / 4} \prod_{j=1}^K \Gamma\left(\frac{N}{2}-\frac{j-1}{2}\right).
\end{align}

Note that our average correlation matrix $\langle x_i x_j \rangle$ is exactly the sample covariance matrix $M$, because $\langle x_i x_j \rangle=M_{ij}/N$. So it can indeed capture the statistics behind the ensemble average and we can use properties of the Wishart distribution to understand its behavior. In practical computation, $N$ is the sample size of random initial operators and $K$ is the number of selected Lanczos coefficients for each initial operator. 

\subsection{Distribution $f_{\sigma^2}$ of the variance $\sigma^2$}
\label{sec:variance-dist}

Now we have a qualitative understanding of the statistics of the average correlation matrix $\left \langle x_ix_j \right \rangle $ from the aspect of the Wishart distribution. What about the statistics of the resulting distribution $f_{\sigma^2}$ of the variance? We expect it to follow the same distribution as the average correlation matrix.  

For convenience,  suppose the variance $\sigma_{\mathcal{O}}^2$ is computed from $K=2p$ Lanczos coefficients for a given initial operator $\mathcal{O}$. By definition~\eqref{eq:variance}, we have
\begin{align}
\label{eq:sigma-twoterm}
\sigma_{\mathcal{O}}^2 &= \frac{1}{p} \sum_{i = 1}^{p}  x_i^2-\frac{1}{p^2} \bigg(\sum _{i=1}^{p}   x_i \bigg)^2 = \bigg(\frac{1}{p}-\frac{1}{p^2}\bigg) \sum_{i = 1}^{p}   x_i^2-\frac{1}{p^2} \sum _{i\neq j}^{p}  x_i x_j.
\end{align}
Qualitatively, we can expect that the second term with $i \neq j$ can be neglected compared with the first term. This is true when the set size $K$ is large. This condition is satisfied in all practical computation of the variance $\sigma^2$ and the second term is  smaller than the first term by at least one order of magnitude. So we obtain qualitatively
\begin{align}
\label{eq:scaled-chi-square}
\sigma_{\mathcal{O}}^2 &\sim  \bigg(\frac{1}{p}-\frac{1}{p^2}\bigg) \sum_{i = 1}^{p}   x_i^2.
\end{align}

This is the general result and it is hard to tell which distribution it satisfies because $x_i$ belongs to a vector  $\vec{X}$ from a multivariate normal distribution $N_K(0, \Sigma)$. 
But in the simple case when $x_i$ are independent random variables from a univariate normal distribution  $N(0, \sigma_x)$, we can scale $x_i$ by an overall factor to make its standard deviation being $1$ and the resulting distribution $f_{\sigma^2}$ of the variance~\eqref{eq:scaled-chi-square} is proportional to the chi-square distribution by an overall factor. We will provide numerical examples for this simplified case of  rescaled chi-square distribution.  

For the purpose of practical computation,  the amount of random numbers is determined by $\mathcal{N}_{max}$, $K$ and the sample size $N$ of initial operators. If they are large enough, we would expect the resulting distribution to be normal. Usually the choice of $N=5000$, $\mathcal{N}_{max}=100$ and $K=500$ turns out to be large enough for the resulting distribution  to be the normal distribution  and this choice in consistent with~\cite{Hashimoto:2023swv}.

\section{Numerical examples}
\label{sec:results}

As an example, we study the Sinai billiard and the Stadium billiard which are  typical two-dimensional systems of chaos. The Hamiltonian is defined by
\begin{align}
\label{eq:B-H}
    H=p_x^2+p_y^2+V(x,y)
\end{align}
with the potential 
\begin{align}
\label{eq:potential}
    V(x,y)=\begin{cases} 0 & (x,y) \in \Omega \\ \infty & \mathrm{else} \end{cases}.
\end{align}
For the Sinai billiard, the region $\Omega$ is obtained by cutting out a circle of radius $l$ from an equilateral  triangle of side length $L$, as shown in Figure~\ref{fig:B-region-sinai}. The system is integrable when $ l=0$ and chaotic when $0<l<L$. For the Stadium billiard, the region $\Omega$ is obtained by combining a rectangle and a quarter circle as shown in Figure~\ref{fig:B-region-Stadium}. For convenience, the area  $\Omega$ is kept at 1 and the parameter $a\equiv l/L$ is introduced, whose value determines the degree of chaos for the system.
\begin{figure}[!htbp]
        \centering
        \begin{subfigure}[b]{0.48\textwidth}
             \centering
             \includegraphics[width=1\textwidth,height=0.8\textwidth]{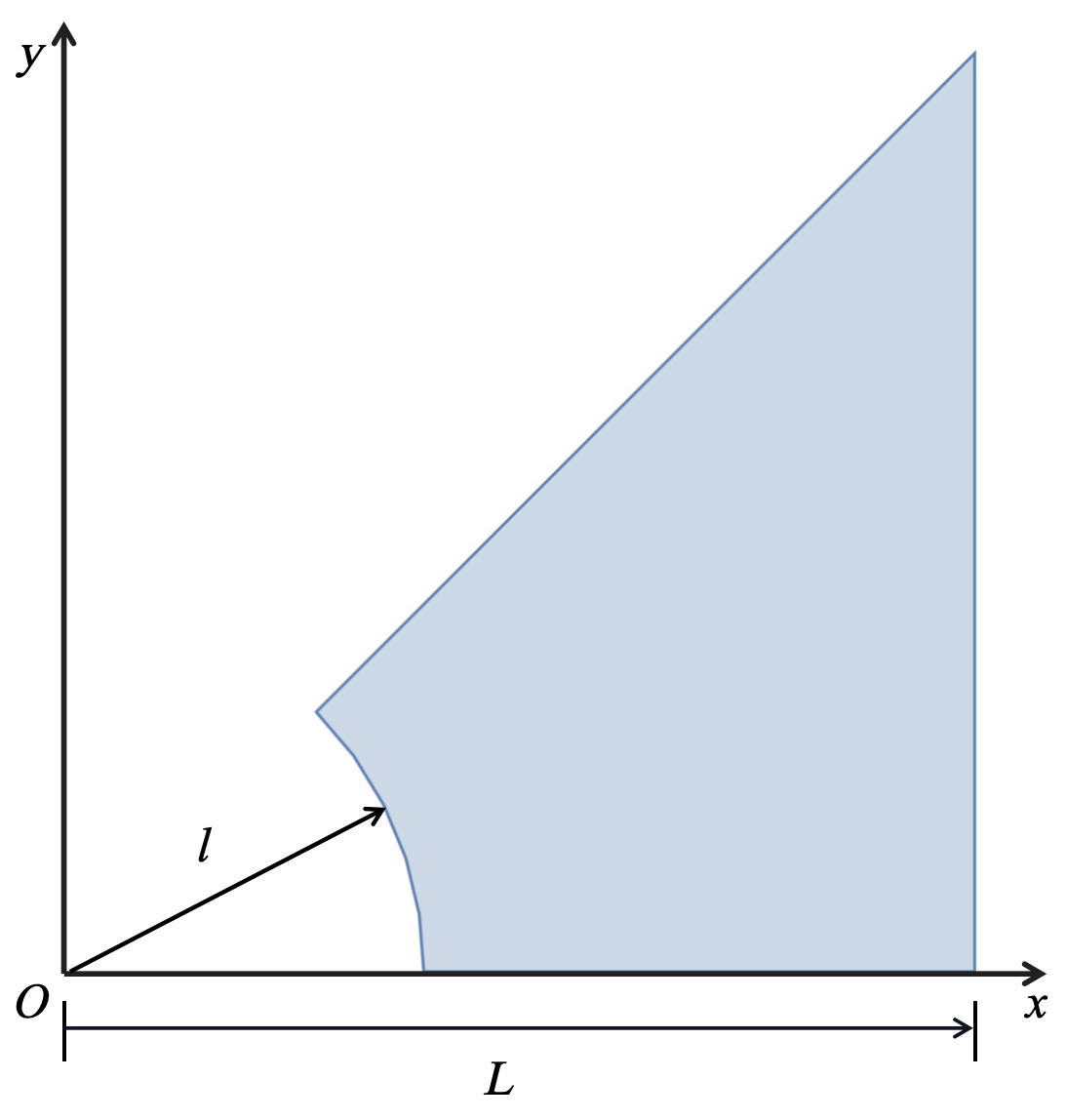} 
\caption{The Sinai billiard. Dirichlet conditions are imposed on the boundaries.}
\label{fig:B-region-sinai}
        \end{subfigure}
           \hfill
        \begin{subfigure}[b]{0.48\textwidth}
             \centering
             \includegraphics[width=1\linewidth,height=0.8\textwidth]{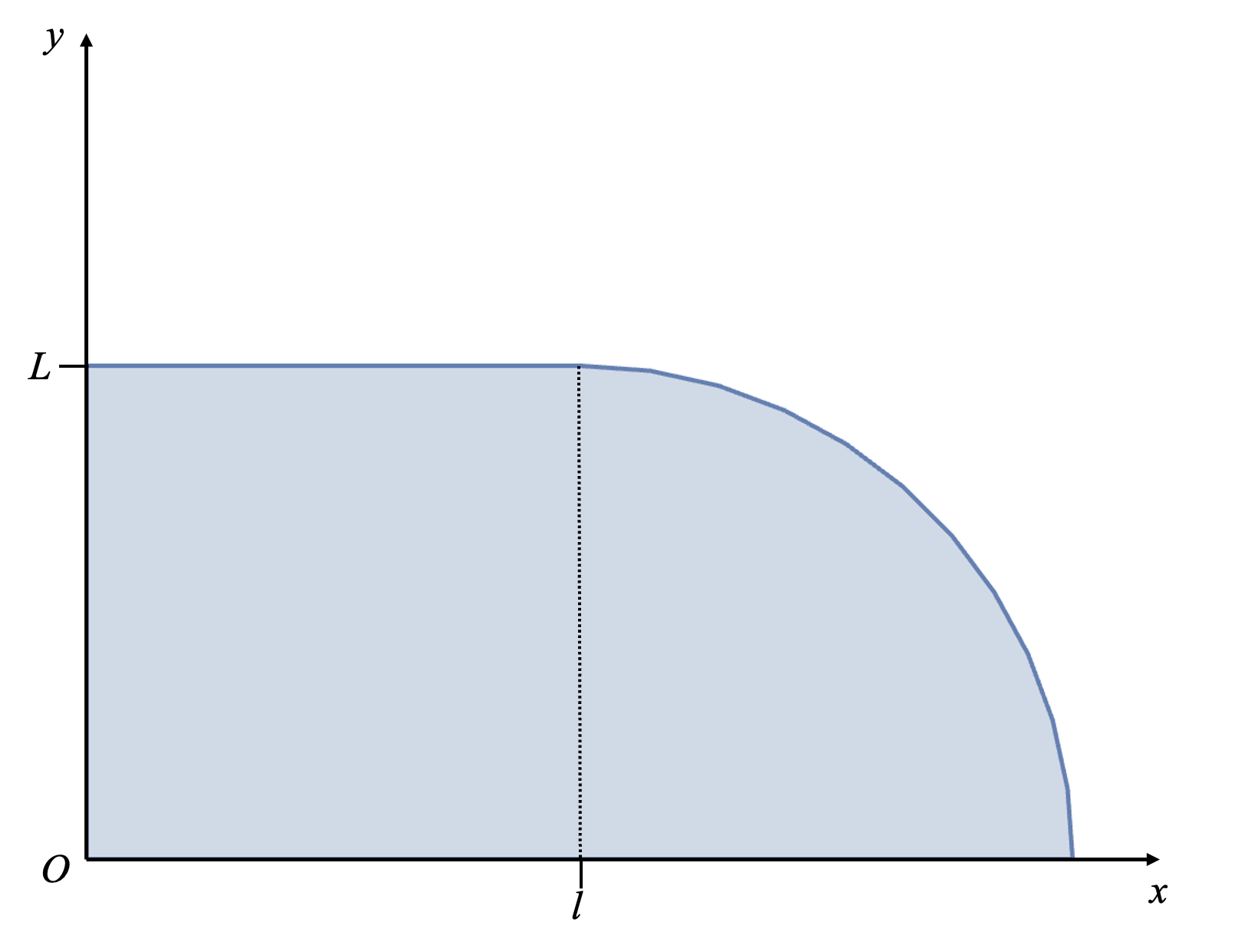} 
\caption{The Stadium billiard. Dirichlet conditions are imposed on the boundaries.}
             \label{fig:B-region-Stadium}
        \end{subfigure}     
        \caption{Region $\Omega$ of $V(x,y)$~\eqref{eq:potential} for the Sinai billiard and the Stadium billiard.}
\label{fig:B-region}
\end{figure}

For practical computation in the energy representation, we need to set the value of $\mathcal{N}_{max}$  to truncate the number of energy levels. It is adequate to take a value as long as the resulting distributions of  $\left \langle x_ix_j \right \rangle $ and  $f_{\sigma^2}$ are normal. When this happens, there are enough randomness in the statistics so that $\left \langle x_ix_j \right \rangle $ and  $f_{\sigma^2}$ can capture the relevant physics. Empirically, $\mathcal{N}_{max}=50$ is enough here. For the average correlation matrix  $\left \langle x_ix_j \right \rangle $, we use  $\mathcal{N}_{max}=50$ for the purpose of numerical efficiency. For the distribution $f_{\sigma^2}$ of the variance, we use  $\mathcal{N}_{max}=100$  to keep accordance  with~\cite{Hashimoto:2023swv}. For random initial operators, we choose several different probability distributions and compare their effects. Specifically, we consider random initial operators ($\mathcal{O}_{GOE}$, $\mathcal{O}_{GUE}$ and $\mathcal{O}_{uni}$) from the  Gaussian orthogonal ensemble (GOE), the Gaussian unitary ensemble (GUE)  and the uniform distribution.   In the case of GOE, the random matrix, $\mathcal{O}_{mn} \coloneqq \left\langle m \right| \mathcal{O}_{GOE}\left | n  \right \rangle  $, is a real symmetric matrix whose matrix elements satisfy the following joint probability density function~\cite{mehta1991random}:
\begin{align}
\label{eq:goe-prob}
    P\left ( \mathcal{O}_{11} ,\mathcal{O}_{12},\cdots,\mathcal{O}_{NN}   \right ) =(\frac{1}{2 \pi }) ^{\frac{N}{2}}(\frac{1}{ \pi }) ^{\frac{N^2-N}{2}}\exp \left [ -\frac{1}{2}\text{Tr} (\mathcal{O}_{GOE}^2)    \right ]. 
\end{align}
In the case of GUE,  the random matrix, $\mathcal{O}^{GUE}_{mn} \coloneqq  \left\langle m \right| \mathcal{O}_{GUE}\left | n  \right \rangle  $ is  Hermitian  whose elements satisfy the following joint probability density function:
\begin{align}
\label{eq:gue-prob}
&P\left ( \mathcal{O}^{(0)}_{11} ,\mathcal{O}^{(0)}_{12},\cdots,\mathcal{O}^{(0)}_{NN}   \right ) =(\frac{1}{2 \pi }) ^{\frac{N}{2}}(\frac{1}{ \pi }) ^{\frac{N^2-N}{2}}\exp \left [ -\frac{1}{2}\text{Tr} (\mathcal{O}^{(0)2})    \right ], \nonumber\\
&  P\left ( \mathcal{O}^{(1)}_{11} ,\mathcal{O}^{(1)}_{12},\cdots,\mathcal{O}^{(1)}_{NN}   \right ) =(\frac{1}{2 \pi }) ^{\frac{N}{2}}(\frac{1}{ \pi }) ^{\frac{N^2-N}{2}}\exp \left [ -\frac{1}{2}\text{Tr} (\mathcal{O}^{(1)2})    \right ], 
\end{align}
with $\mathcal{O}^{(0)}$ being the real part of $\mathcal{O}^{GUE}_{mn}$, and $\mathcal{O}^{(1)}$ being the imaginary part of $\mathcal{O}^{GUE}_{mn}$.
In the case of uniform distribution, three types of random operators are studied. Their random matrices belong to three types: purely real symmetric (URE), purely imaginary Hermitian (UIM), and  complex Hermitian (UCP). 
\begin{align}
\label{eq:uni-prob}
    \mathcal{O}_{mn}^{(0,1)} \sim U(-\sqrt{3} ,\sqrt{3}) 
\end{align}
% , whose elements satisfy the following joint probability density function
% \begin{align}
% \label{eq:uniform-prob}
%     P\left ( \mathcal{O}_{11} ,\mathcal{O}_{12},\cdots,\mathcal{O}_{NN}   \right ) =\left\{\begin{matrix} (\frac{1}{2})^{(N{^2+N})/2} &,\left [ -1,1 \right ] \\0 &,\text{else} \end{matrix}\right.
% \end{align}

Then we pick a random matrix $\mathcal{M}_{\mathcal{O}}$ of size $\mathcal{N}_{max} \times \mathcal{N}_{max}$ from the above probability distributions~\eqref{eq:goe-prob}-\eqref{eq:uni-prob}  for the initial random operator $\mathcal{O}$. After that, we compute the Lanczos coefficient $\left \{ b_n \right \} $ of $\mathcal{K}_m(\mathcal{L},\mathcal{M}_{\mathcal{O}})$ 
 by the algorithm in Section \ref{sec:algorithm}. To calculate the variance $\sigma^2$ of  Lanczos coefficients, we choose the set size to be $K=5\mathcal{N}_{max}$, in consistent  with~\cite{Hashimoto:2023swv}. Finally,   we repeat this process  $N=5000$ times to get the resulting distribution of $\left \langle x_ix_j \right \rangle $ and $f_{\sigma^2}$. To keep focus on the main physics, we put numerical details that are not essential to the main discussion into the Appendix.

\subsection{Distributions of $\langle x_i x_j \rangle$ and $\sigma^2$}

Firstly let's look at the Wishart distribution  $W_K(\Sigma, N)$ of the average correlation matrix $\langle x_i x_j \rangle$. The Wishart distribution can be viewed as a multi-variate generalization of the chi-squared distribution as follows: if $K=\Sigma=1$, it reduces to the chi-squared distribution with $N$ degrees of freedom. However, this is exactly the case in the numerical example:  the distributions of initial operators~\eqref{eq:goe-prob}-\eqref{eq:uni-prob} are univariate, whose $\Sigma$ is the identity matrix. So it is natural to expect that the resulting $K$-component random vector $\vec{X}$ is also univariate, that is, the  multivariate normal distribution $N_K(0, \Sigma)$ of $\vec{X}$  reduces to a univariate normal distribution  $N(0, \sigma_x)$. In this way, the Wishart distribution satisfied by $\langle x_i x_j \rangle$ reduces to the chi-square distribution of $\langle x_i^2 \rangle$. For a small matrix-size with $\mathcal{N}_{max}=15$ (that is away from the central limit), the histogram of $\langle x_i^2 \rangle$  is shown in  Figure~\ref{fig:B-chaosN5-bb_GUE} for the Sinai billiard and in  Figure~\ref{fig:B-chaosN5-bb_GUE-Stadium} for the Stadium billiard,  which  are indeed  the chi-square distribution. As the size of matrix increases, the distribution would become normal by the central limit theorem. For $\mathcal{N}_{max}=50$,  the corresponding histogram of  $\langle x_i x_j \rangle$ is plotted in  Figure~\ref{fig:B-chaosN50-bb_GUE} for the Sinai billiard and in  Figure~\ref{fig:B-chaosN50-bb_GUE-Stadium} for the Stadium billiard, and they are indeed the normal distribution. Here we only show the case when the system is chaotic and the initial distribution is GUE. For other cases there are similar properties, which are omitted because they do not provide extra insights and would ruin the conciseness and readability of the paper.   
\begin{figure}[!htbp]
        \centering
        \begin{subfigure}[b]{0.48\textwidth}
             \centering
             \includegraphics[width=1\linewidth]{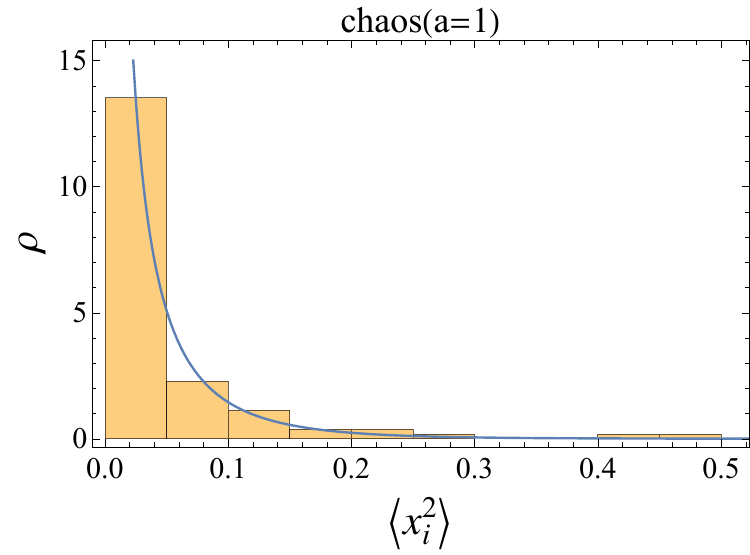} 
             \caption{Sinai billiard, $\langle x_i^2 \rangle$, $\mathcal{N}_{max}=15$.}
             \label{fig:B-chaosN5-bb_GUE}
        \end{subfigure}
           \hfill
        \begin{subfigure}[b]{0.48\textwidth}
             \centering
             \includegraphics[width=1\linewidth]{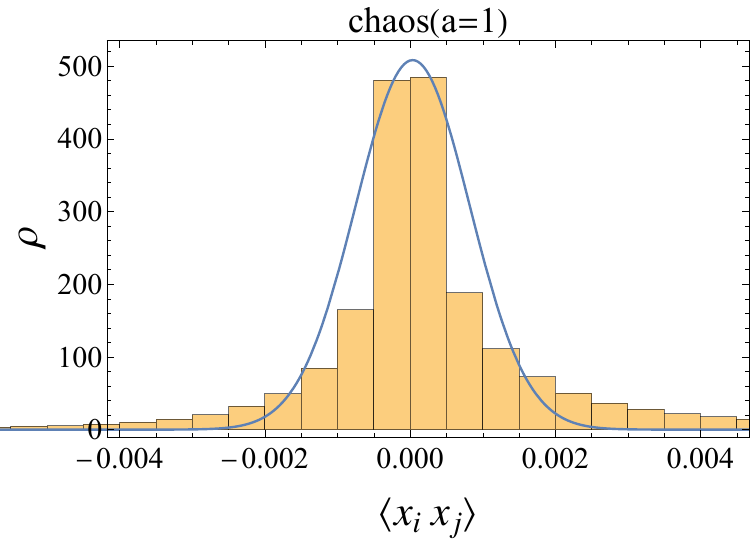} 
             \caption{Sinai billiard,  $\langle x_i x_j \rangle$, $\mathcal{N}_{max}=50$.}
             \label{fig:B-chaosN50-bb_GUE}
        \end{subfigure}    
        \begin{subfigure}[b]{0.48\textwidth}
             \centering
             \includegraphics[width=1\linewidth]{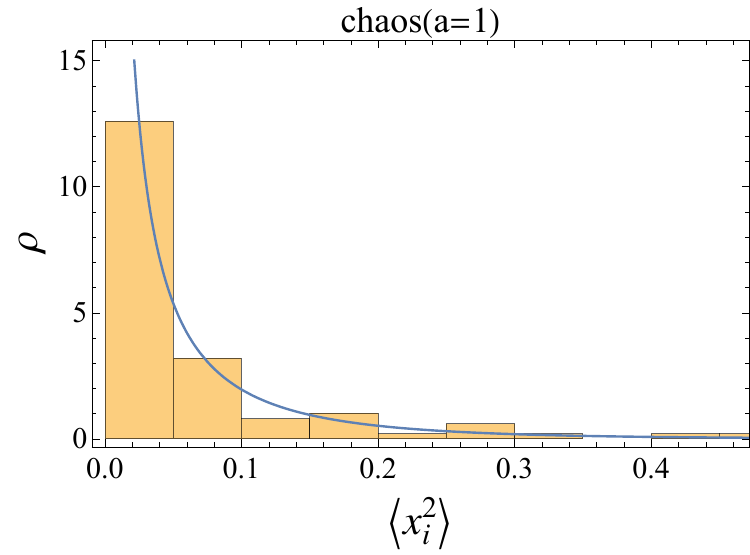} 
             \caption{Stadium billiard, $\langle x_i^2 \rangle$, $\mathcal{N}_{max}=15$.}
             \label{fig:B-chaosN5-bb_GUE-Stadium}
        \end{subfigure}
           \hfill
        \begin{subfigure}[b]{0.48\textwidth}
             \centering
             \includegraphics[width=1\linewidth]{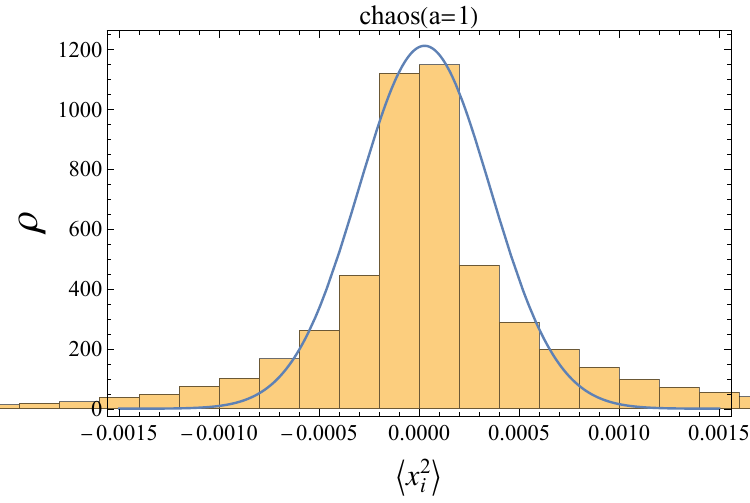} 
             \caption{Stadium billiard, $\langle x_i x_j \rangle$, $\mathcal{N}_{max}=50$.}
             \label{fig:B-chaosN50-bb_GUE-Stadium}
        \end{subfigure}             
        \caption{Wishart distribution of $\left \langle x_ix_j \right \rangle $ in the chaotic case for GUE.}
        \label{fig:B-bb-chi-chaos}
\end{figure}

Now let's look at the  resulting distribution $f_{\sigma^2}$ of the variance. In this numerical example  $x_i$ are essentially independent random variables from  $N(0, \sigma_x)$, so the resulting distribution $f_{\sigma^2}$ is the rescaled chi-square distribution  according to ~\eqref{eq:scaled-chi-square}.
Numerical results indeed support this expectation.
For a very small matrix-size with $\mathcal{N}_{max}=5$ (far away from the central limit), the histogram of the variance $\sigma^2$  is shown in  Figure~\ref{fig:chisquare-fsigma-GOE-rescale} for the Sinai billiard and in  Figure~\ref{fig:chisquare-fsigma-GOE-rescale-Stadium} for the Stadium billiard,  which  are indeed  the rescaled chi-square distribution with the curve being the fitted  distribution. As the size of matrix increases, the distribution would become normal by the central limit theorem. For $\mathcal{N}_{max}=100$,  the corresponding histogram of the variance $\sigma^2$  approaches the normal distribution and is plotted in  Figure~\ref{fig:chisquare-fsigma-GOE-normal} for the Sinai billiard and in  Figure~\ref{fig:chisquare-fsigma-GOE-normal-Stadium} for the Stadium billiard, where the curve are the fitted normal distribution. Here we only show for the case of GOE when the system is chaotic. For other cases there are similar results, which are omitted again for the conciseness and readability of the paper. 
\begin{figure}[!htbp]
        \centering
        \begin{subfigure}[b]{0.48\textwidth}
                \centering
                \includegraphics[width=\linewidth]{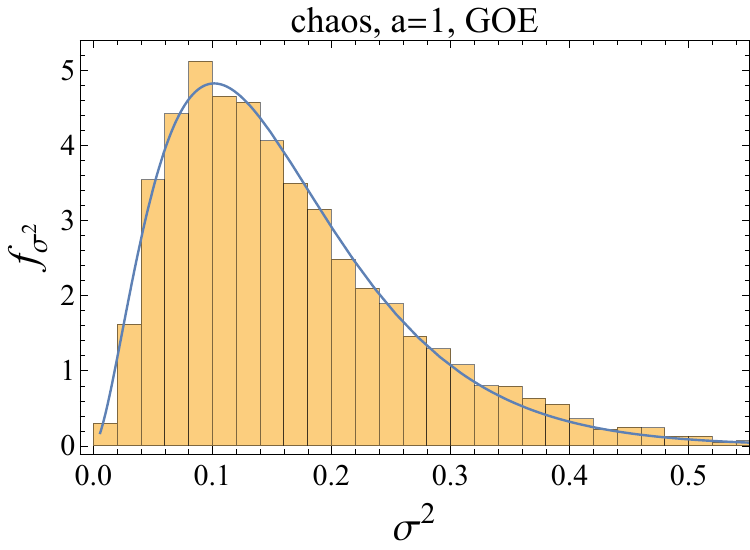}
                 \caption{Sinai billiard, $\mathcal{N}_{max}=5$, rescaled chi-square distribution.}
                 \label{fig:chisquare-fsigma-GOE-rescale}
        \end{subfigure}
           \hfill
        \begin{subfigure}[b]{0.48\textwidth}
                \centering
                \includegraphics[width=\linewidth]{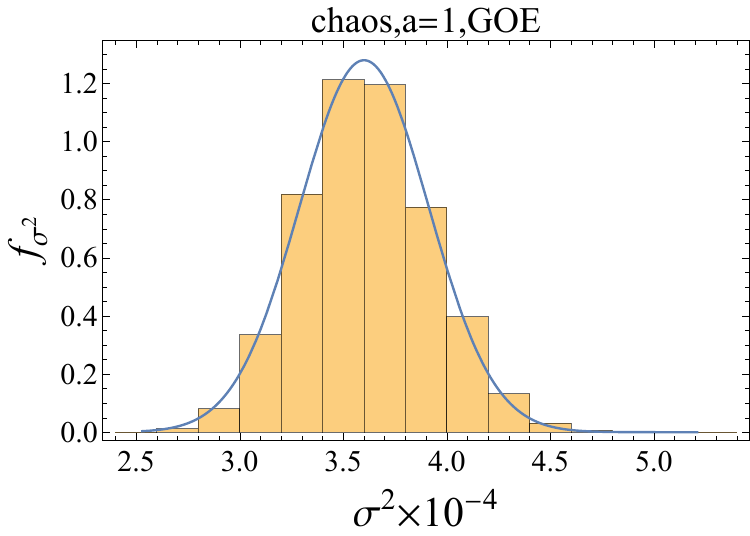}
                 \caption{Sinai billiard, $\mathcal{N}_{max}=100$, normal distribution. Skewness=0.201339, Kurtosis=3.19744}
                 \label{fig:chisquare-fsigma-GOE-normal}
        \end{subfigure}
        \begin{subfigure}[b]{0.48\textwidth}
                \centering
                \includegraphics[width=\linewidth]{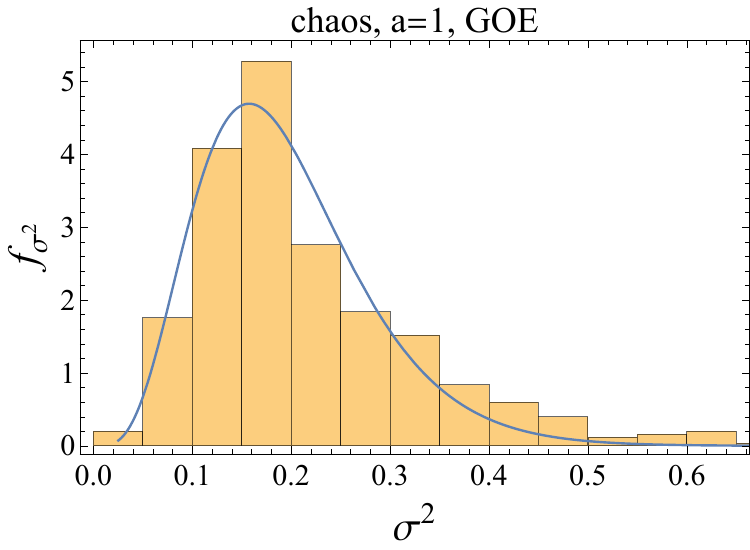}
                 \caption{Stadium billiard, $\mathcal{N}_{max}=5$, rescaled chi-square distribution.}
                 \label{fig:chisquare-fsigma-GOE-rescale-Stadium}
        \end{subfigure}
           \hfill
        \begin{subfigure}[b]{0.48\textwidth}
                \centering
                \includegraphics[width=\linewidth]{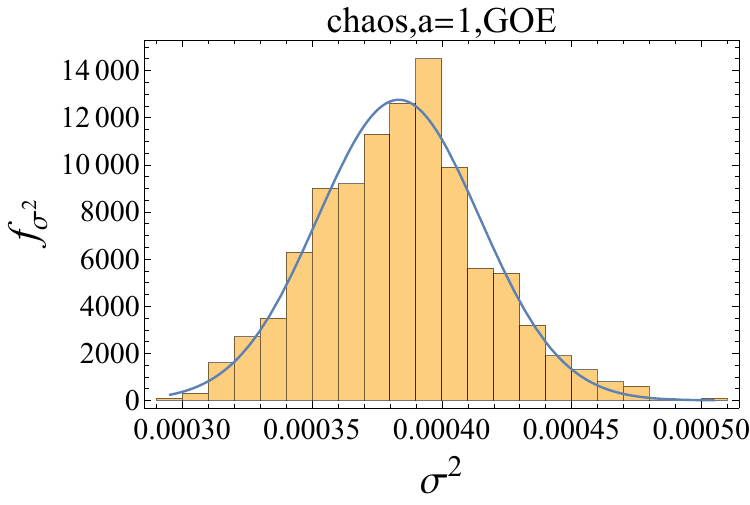}
                 \caption{Stadium billiard, $\mathcal{N}_{max}=100$, normal distribution. Skewness=0.18738, Kurtosis=3.05751.}
                 % 【】再强调一下这个分布不依赖选取的bn的数量？，nmax>15
                 \label{fig:chisquare-fsigma-GOE-normal-Stadium}
        \end{subfigure}
        \caption{Rescaled chi-square distribution of $f_{\sigma^2}$ in the chaotic case for GOE.}    
\end{figure}

\subsection{Behaviors between the nonchaotic and the chaotic case}

 \begin{figure}[h]
        \centering
        \begin{subfigure}[b]{0.48\textwidth}
             \centering
             \includegraphics[width=1\linewidth]{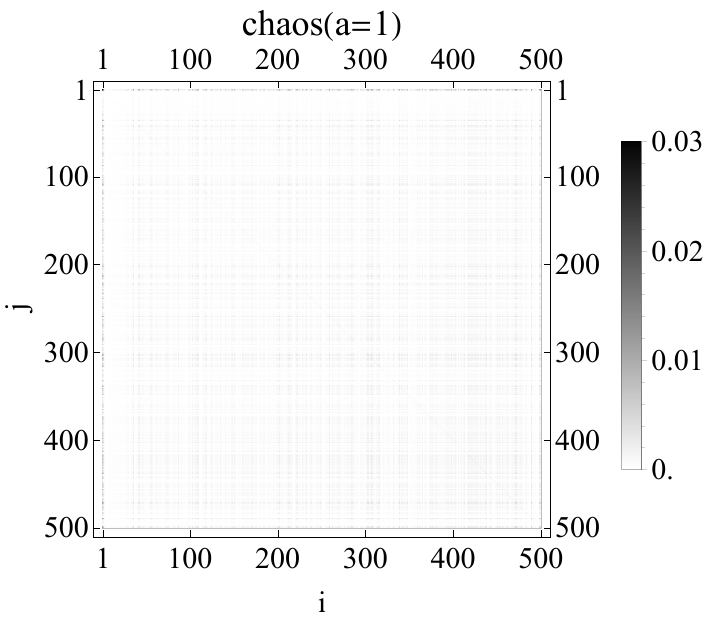} 
             \caption{Sinai billiard, $ \left | \langle  x_i x_j \rangle \right | $, $a=1$.}
             \label{fig:B-bnrelation_GUE_chaos}
        \end{subfigure}
           \hfill
        \begin{subfigure}[b]{0.48\textwidth}
             \centering
             \includegraphics[width=1\linewidth]{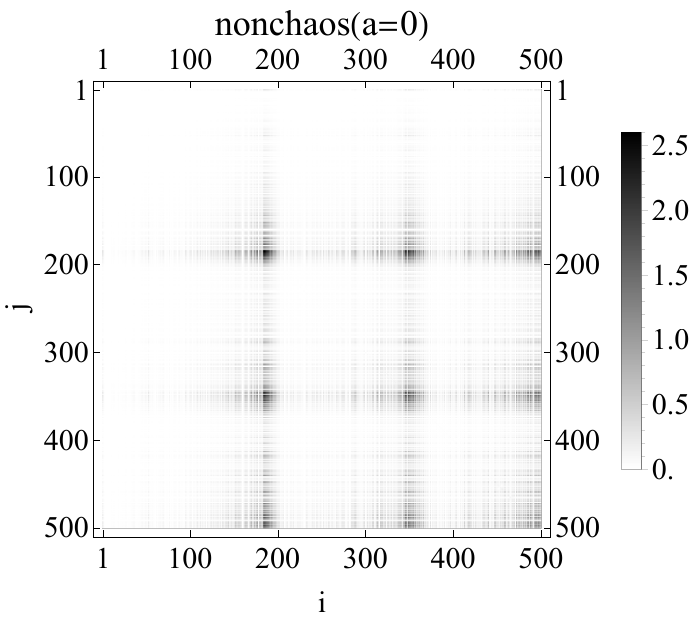} 
             \caption{Sinai billiard, $ \left | \langle  x_i x_j \rangle \right | $, $a=0$.}
             \label{fig:B-bnrelation_GUE_nonchaos}
        \end{subfigure}     
        \begin{subfigure}[b]{0.48\textwidth}
             \centering
             \includegraphics[width=1\linewidth]{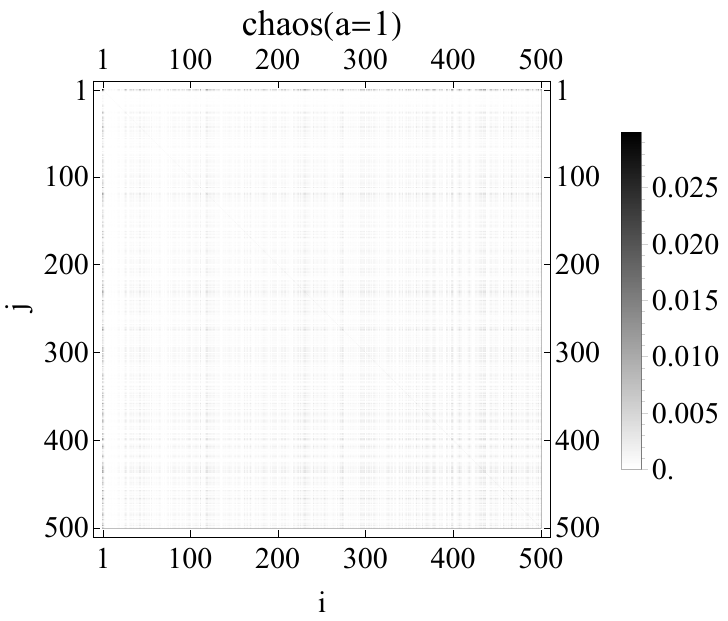} 
             \caption{Stadium billiard, $ \left | \langle  x_i x_j \rangle \right | $. $a=1$.}
             \label{fig:B-bnrelation_GUE_chaos-Stadium}
        \end{subfigure}
           \hfill
        \begin{subfigure}[b]{0.48\textwidth}
             \centering
             \includegraphics[width=1\linewidth]{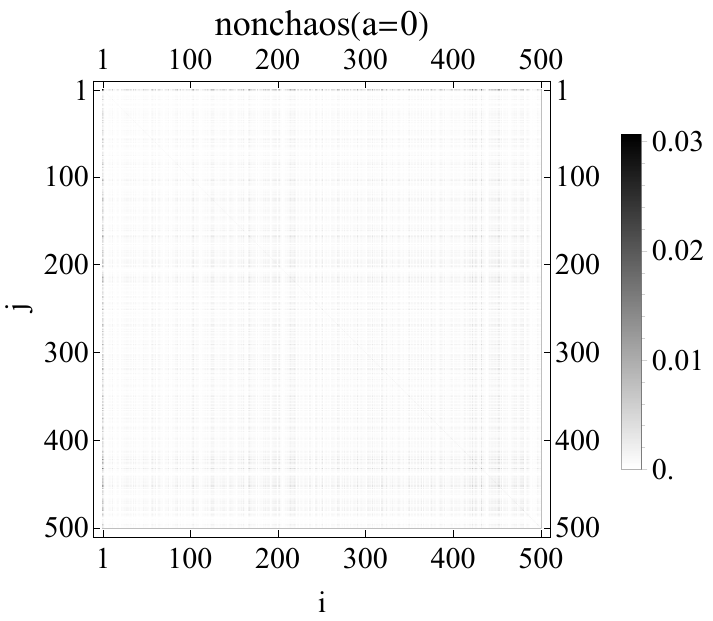} 
             \caption{Stadium billiard, $ \left | \langle  x_i x_j \rangle \right | $. $a=0$.}
             \label{fig:B-bnrelation_GUE_nonchaos-Stadium}
        \end{subfigure}             
        \caption{Patterns of $ \left | \langle  x_i x_j \rangle \right |  $ for the Sinai billiard and the Stadium billiard, $\mathcal{N}_{max}=50$.}
        \label{fig:B-bnrelation-50}
\end{figure}
 
\begin{figure}[h]
    \centering
    \begin{subfigure}[b]{1\textwidth}
        \centering
        \includegraphics[width=1\textwidth,height=0.45\textwidth]{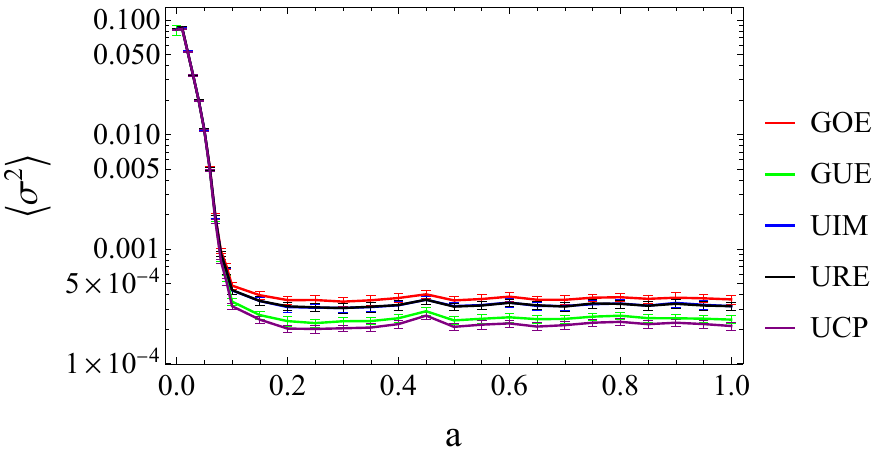} 
        \caption{Logarithm plot for the average  $\langle\sigma^2\rangle$ of $f_{\sigma^2}$. The error bar is the standard deviation. $\mathcal{N}_{max}=100$.}
        \label{fig:B-core-Sinai}
    \end{subfigure}
        \begin{subfigure}[b]{0.48\textwidth}
             \centering
             \includegraphics[width=1\linewidth]{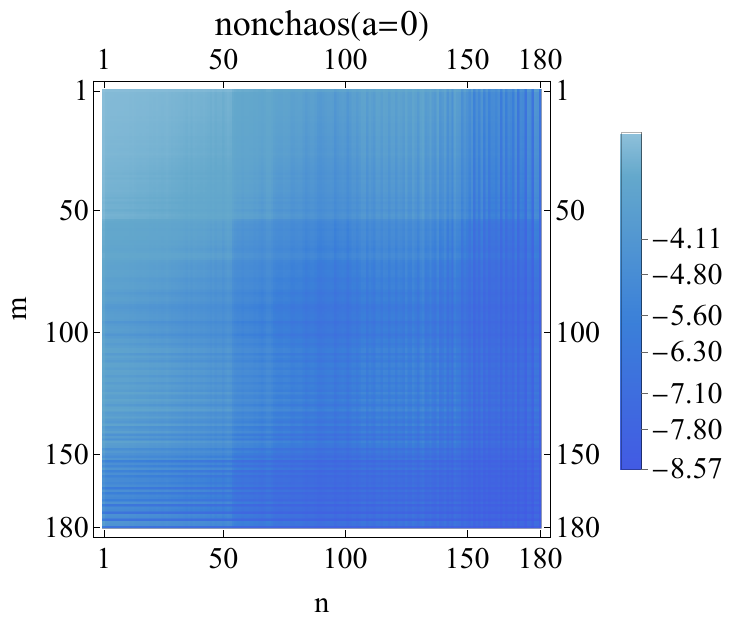}  
             \caption{$\left \langle \ln{|\psi_{2m}\psi_{2n}|} \right \rangle $, $a=0$, $\mathcal{N}_{max}=50$.}
             \label{fig:B-psirelation_GUE_chaos}
        \end{subfigure}
           \hfill
        \begin{subfigure}[b]{0.48\textwidth}
             \centering
             \includegraphics[width=1\linewidth]{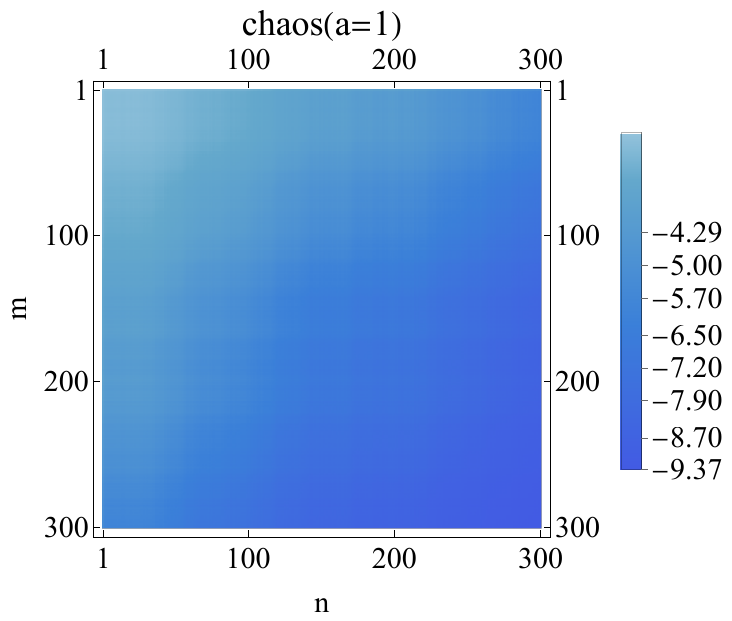}
             \caption{$\left \langle \ln{|\psi_{2m}\psi_{2n}|} \right \rangle $ , $a=1$, $\mathcal{N}_{max}=50$.}
             \label{fig:B-psirelation_GUE_nonchaos}
        \end{subfigure}      
    \caption{Sinai billiard: behaviors of $f_{\sigma^2}$ and $\left \langle \ln{|\psi_{2m}\psi_{2n}|} \right \rangle $ when the system changes from nonchaotic ($a=0$)  to chaotic ($a=1$).}
    \label{fig:different-behavior-Sinai}
\end{figure}

\begin{figure}[h]
    \centering
    \begin{subfigure}[b]{1\textwidth}
        \centering
        \includegraphics[width=1\textwidth,height=0.5\textwidth]{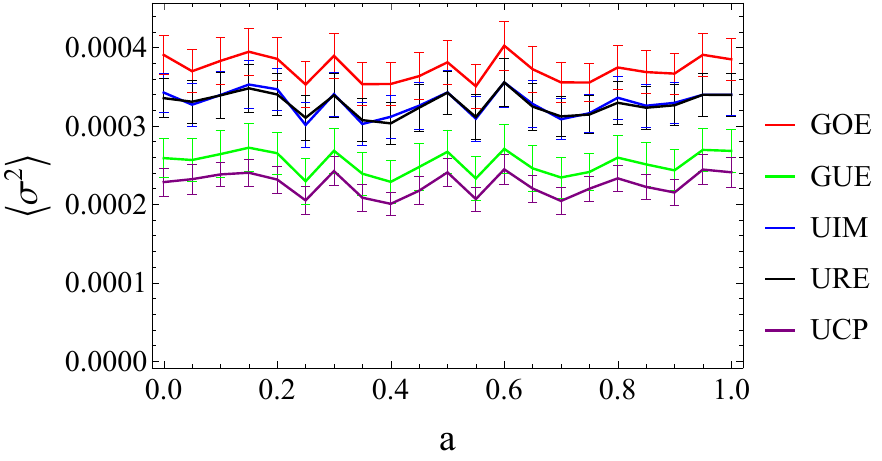} 
        \caption{Plot for the  average  $\langle\sigma^2\rangle$ of $f_{\sigma^2}$. The error bar is the standard deviation. $\mathcal{N}_{max}=100$. }
        \label{fig:st-core-Stadium}
    \end{subfigure}  
        \begin{subfigure}[b]{0.48\textwidth}
             \centering
             \includegraphics[width=1\linewidth]{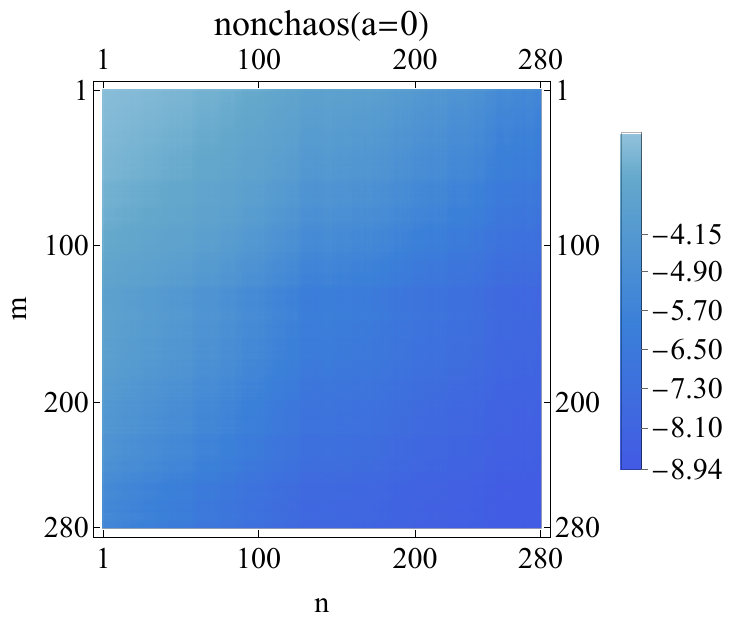}  
             \caption{$\left \langle \ln{|\psi_{2m}\psi_{2n}|} \right \rangle $, $a=0$, $\mathcal{N}_{max}=50$.}
             \label{fig:B-psirelation_GUE_chaos-Stadium}
        \end{subfigure}
           \hfill
        \begin{subfigure}[b]{0.48\textwidth}
             \centering
             \includegraphics[width=1\linewidth]{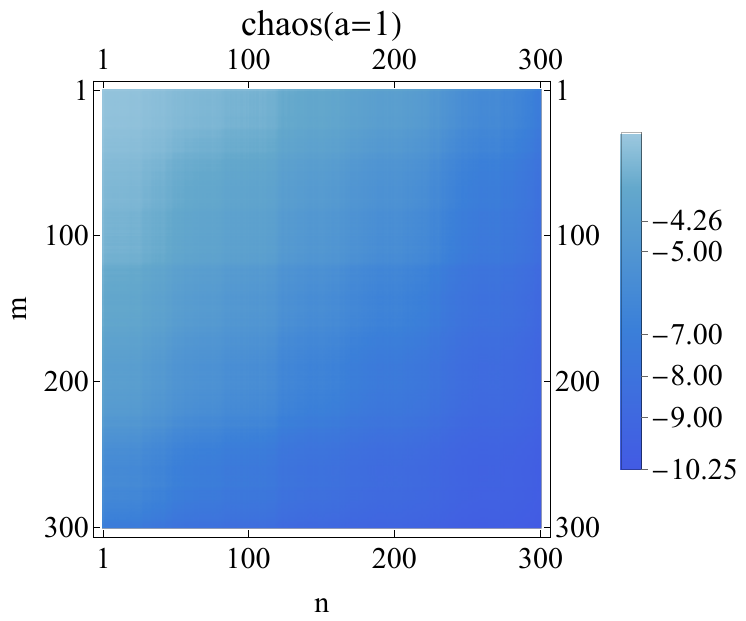}
             \caption{$\left \langle \ln{|\psi_{2m}\psi_{2n}|} \right \rangle $ , $a=1$, $\mathcal{N}_{max}=50$.}
             \label{fig:B-psirelation_GUE_nonchaos-Stadium}
        \end{subfigure}
    \caption{Stadium billiard: behaviors of $f_{\sigma^2}$ and $\left \langle \ln{|\psi_{2m}\psi_{2n}|} \right \rangle $ when the system changes from nonchaotic ($a=0$)  to chaotic ($a=1$).}
    \label{fig:different-behavior-Stadium}
\end{figure}

\textbf{[I] Correlation matrix:} For the Sinai billiard,
the average correlation matrix $\left \langle x_ix_j \right \rangle$  has different patterns between the nonchaotic and the chaotic case, shown in Figure~\ref{fig:B-bnrelation_GUE_chaos} and ~\ref{fig:B-bnrelation_GUE_nonchaos}  for the case of GUE. 
With $\mathcal{N}_{max}=50$, its  histogram approaches the normal distribution as shown in Figure~\ref{fig:B-chaosN50-bb_GUE} so that there is enough randomness in the statistics. There is a clear distinction in the pattern between the chaotic and the nonchaotic case. In the nonchaotic case, there are heavy cross-like structures with large matrix values. But in the chaotic case, these heavy cross-like structure disappears (or fades away) and matrix values  are more uniformly or smoothly distributed. 

For the Stadium billiard, the average correlation matrix $\left \langle x_ix_j \right \rangle$ has no clear distinction  between the chaotic and the nonchaotic case, as shown in Figure~\ref{fig:B-bnrelation_GUE_chaos-Stadium} and ~\ref{fig:B-bnrelation_GUE_nonchaos-Stadium}  for the case of GUE. With $\mathcal{N}_{max}=50$, its  histogram approaches the normal distribution as shown in Figure~\ref{fig:B-chaosN50-bb_GUE-Stadium} so that there is enough randomness in the statistics. However, matrix values are smoothly distributed without cross-like structures in both the chaotic and the nonchaotic case. 

From the aspect of RMT, the average correlation matrix $\left \langle x_ix_j \right \rangle$ is  the common estimator of the statistics. The similarity of Figure~\ref{fig:B-bnrelation_GUE_chaos}, ~\ref{fig:B-bnrelation_GUE_chaos-Stadium} and ~\ref{fig:B-bnrelation_GUE_nonchaos-Stadium} tells us that they have similar statistics. This is interesting because the Stadium billiard in all phases, whether chaotic or not, are close to the nonchaotic Sinai billiard. 

On the other hand, $x_i$ characterizes the disorderedness of Lanczos coefficients from the aspect of the Krylov chain. So $\left \langle x_ix_j \right \rangle$ quantitatively captures the erratic behavior and the correlation of Lanczos coefficients. The large matrix values for the nonchaotic Sinai billiard (Figure~\ref{fig:B-bnrelation_GUE_nonchaos}) indicate more erratic behavior of Lanczos coefficients and more localization on the Krylov chain. The Stadium billiard (Figure~\ref{fig:B-bnrelation_GUE_chaos-Stadium}, ~\ref{fig:B-bnrelation_GUE_nonchaos-Stadium}) and the chaotic Sinai billiard (Figure~\ref{fig:B-bnrelation_GUE_chaos}) mean that less erratic behavior there.

Furthermore, the repetition of the heavy cross-like structure in the nonchaotic Sinai billiard (Figure~\ref{fig:B-bnrelation_GUE_nonchaos}) is  a very interesting phenomenon. It might come from some abstract periodic boundary conditions  of the Krylov chain. Or it might be connected to the phenomenon of two-electron correlations of Anderson localization on the Krylov chain~\cite{Weinmann1995h2eOF}. To make this connection with Anderson localization, we plot the average correlation matrix of the wave function $\langle\ln{|\psi_{2m}\psi_{2n}|}\rangle$ for the zero frequency state of the Krylov chain. For the nonchaotic Sinai billiard,  there are also heavy cross-like structures for the correlation matrix of the wave function as shown in Figure~\ref{fig:B-psirelation_GUE_nonchaos}. These cross-like structures locally look like the cross-like structure of the  two-electron correlations of Anderson localization in disordered mesoscopic ring~\cite{Weinmann1995h2eOF}. And the cross-like structure of $\ln{|\psi_{2m}\psi_{2n}|}$ disappears in the chaotic Sinai billiard in Figure~\ref{fig:B-psirelation_GUE_chaos}. For the Stadium billiard, the  correlation matrix of the wave function is shown in Figure~\ref{fig:B-psirelation_GUE_chaos-Stadium} for the chaotic case and in Figure~\ref{fig:B-psirelation_GUE_nonchaos-Stadium} for the nonchaotic case. They are nearly the same and is also similar to the chaotic Sinai billiard (Figure~\ref{fig:B-psirelation_GUE_chaos}), without cross-like structure. This behavior is consistent with the 
statistics  $\left \langle x_ix_j \right \rangle$.

\textbf{[I] distribution of the variance $\sigma^2$:} 
For the Sinai billiard, its scatter plots  in Figures~\ref{fig:billiard-sigmaG-behavior-100} and ~\ref{fig:billiard-Histdistribution-100} demonstrate that the resulting distributions $f_{\sigma^2}$ also have different behaviors between  chaotic and nonchaotic cases. 
Their axis values indicate that they can be compared together in one plot.
So we fit~\footnote{ The details of this data fitting is shown in the Appendix.} their histogram to the normal distribution and find the average value $\langle \sigma^2 \rangle$ and the standard deviation, which can characterize the resulting distribution $f_{\sigma^2}$.  Then the resulting distribution $f_{\sigma^2}$ can be shown in one plot where the  point is this average value and the error bar is the standard deviation. In Figure~\ref{fig:B-core-Sinai}, the resulting distributions  $f_{\sigma^2}$ are shown for all different distributions (GOE, GUE, URE, UIM and UCP) of initial operators, as the system varies from nonchaotic $a=0$ to chaotic $a=1$. When the system is nonchaotic, the data points of all distributions overlap together within the range of error bars. As the system becomes chaotic, different distributions start to separate approximately at $a=0.1$ and the distance of separation is roughly fixed after $a=0.15$. Note that the axis is logarithmic, so the separation is very large. We see that they split into two groups well-separated: one group is of GOE, URE and UIM, the other group is GUE and UCP.  

For the Stadium billiard,  the resulting distributions  $f_{\sigma^2}$ always split into two  well-separated groups  as shown in Figure~\ref{fig:st-core-Stadium}, whether the system is chaotic or not. Note that this is not a logarithmic plot because the values are nearly constants. 
Interestingly, the two groups match those observed in the case of the Sinai billiard of Figure~~\ref{fig:B-core-Sinai}. Comparing the values in these two plots, we see that the average $\langle\sigma^2\rangle$ is nearly identical: in one group  it remains approximately $0.00035$ while in the other it stays around  $0.00025$. From these values of the resulting distributions  $f_{\sigma^2}$, the Stadium billiard exhibits behavior similar to the chaotic Sinai billiard.

\section{Conclusion}
\label{sec:discuss}

In this paper, we studied the statistics involved by Lanczos coefficients in the Krylov operator complexity and deeply exhibited the connection among RMT, Anderson localization and Krylov complexity.  We focused on  the average correlation matrix $\langle x_{i} x_{j}\rangle$  and the resulting distribution  $f_{\sigma^2}$ of the variance. We found in a phenomenological way that they satisfy  the Wishart distribution and the (rescaled) chi-square distribution respectively, independent of the distributions of initial operators. With the help of these two quantities, we explored the Sinai and the Stadium billiard as a numerical example. The statistical distributions are verified and further interesting behaviors are obtained as the system changes from nonchaotic to chaotic, which  indicates a consistent connection among RMT, Anderson localization and Krylov complexity. 

For the Sinai billiard, the nonchaotic case is clearly distinguished from the chaotic case. The nonchaotic case has more erratic behavior as captured by  $\left \langle x_ix_j \right \rangle$ which is consistent with the analysis of~\cite{Dymarsky:2019elm,Rabinovici:2021qqt}, and has heavy cross-like structure as capture by $\ln{|\psi_{2m}\psi_{2n}|}$  which locally looks like two-electron correlations of Anderson localization~\cite{Weinmann1995h2eOF}. The resulting distribution $f_{\sigma^2}$ from different ensembles are overlapped with each other. 
However, for the Stadium billiard, there is no distinction between the nonchaotic and the chaotic case. More interestingly, the whole behavior of the Stadium billiard is similar to the chaotic Sinai billiard. Intuitively we can see that they have similar behavior in the chaotic case. But why the nonchaotic Stadium billiard is also similar to the chaotic case? The consistency among the three quantities $\left \langle x_ix_j \right \rangle$, $\ln{|\psi_{2m}\psi_{2n}|}$ and $f_{\sigma^2}$ indicates this result is reliable. So there should be a reason for it. Currently this is an open question. 

Another interesting behavior is the two groups of separation for the resulting distribution $f_{\sigma^2}$. The GOE, URE and UIM are in one group. We can view  GOE, URE and UIM as essentially composed of only real numbers, because the UIM is just real numbers multiplied by the overall factor-the imaginary $i$. The other group is GUE and UCP, which are essentially composed of complex numbers with both real and imaginary parts. This separation seems to be connected with the dimension of random numbers, which might be  related to the nature of the integrability-breaking-term  or to the symmetry of the model~\cite{haake1991quantum}. This remains to be one of the open problems. 

In addition, we emphasize the importance of the resulting distribution being the normal distribution.  We repeat the computation for $\mathcal{N}_{max}=5$ in Appendix, where the resulting distribution $f_{\sigma^2}$ is the rescaled chi square and not normal. Then there is no separation of $f_{\sigma^2}$ in all cases. Note that this rescaled chi square behavior of $\mathcal{N}_{max}=5$ further proves the phenomenological analysis in Section~\ref{sec:statistics-property}. Because few energy levels are kept with $\mathcal{N}_{max}=5$ it is impossible to extract the statistics numerically in a bottom-up manner. However, once the statistics is given by phenomenological analysis, the numerical results of $\mathcal{N}_{max}=5$ can prove the statistics in a top-down manner. 

Qualitatively the resulting distribution being normal means that we have enough randomness in the sample, so that the statistics is reliable and robust. Empirically, a choice of $\mathcal{N}_{max}=50$ is already adequate to get the normal distribution. It remains an open problem to quantitatively understand the connection between being normal distributions and the separation behavior of $f_{\sigma^2}$.

Finally, we explain a connection (or equivalence) between the average over initial operators and the RMT Hamiltonians. For the chaotic case,  a similar behavior-of-separation is obtained related with RMT~\cite{Rabinovici:2022beu}. For a spin system~\cite{Rabinovici:2022beu} under chaotic dynamics, the Hamiltonian is sampled from RMT ensembles while the initial operator is fixed. For Hamiltonians from GUE and GOE,  a separation is obtained in the late-time saturation of the Krylov complexity. Qualitatively, the behavior of late-time saturation of K-complexity are consistent with the behavior of variances $\sigma^2$ of the Lanczos coefficients~\cite{Rabinovici:2021qqt}. So it might be possible that the ensemble average of the variance $\sigma^2$ over RMT Hamiltonian are also separating in~\cite{Rabinovici:2022beu}, although it is not computed there. If this indeed happen, the separation behavior of the resulting distributions $f_{\sigma^2}$ and of the late-time saturation of the Krylov complexity can be viewed as  'dual' to each other for chaotic systems.  

Another sign for this 'duality' also appears in the study of the spin chain~\cite{Trigueros:2021rwj}. There the randomness is introduced by the random transverse field from a uniform distribution. So it is equivalent to a RMT Hamiltonian. There the statistics is focused on the parameters of the linear extrapolation formula of Lanczos coefficients. These fitted parameters also follow the normal distribution, similar to the resulting distribution $f_{\sigma^2}$, although they are completely different quantities. This suggests that the statistics would be normal as long as there is enough randomness, whether it is due to randomn initial operators or RMT Hamiltonians. 

Qualitatively, this 'duality' can be understood as follows: in one side, the    Hamiltonian are random and the initial operator is fixed, and in the other side, the Hamiltonian is fixed and the initial operators are random. They are 'equivalent' when taking the statistics, at least for chaotic systems. From the proof~\eqref{eq:b_O_statistics} of the lemma in Section~\ref{sec:statistics-property}, we see that the Lanczos coefficients and the Krylov basis depend only on the initial operators and energy levels. Both the random energy levels and random initial operators can give random Lanczos coefficients, so intuitively, these statistics can be viewed as dual or equivalent.  It remains an open problem to quantitatively prove this duality. In~\cite{Rabinovici:2021qqt}, the Lanczos coefficients are rewritten in terms of the Hankel determinants of level spacings, which might be helpful in this direction.

\acknowledgments
Wei Fan is supported in part by the National Natural Science Foundation of China under Grant No.\ 12105121.

\appendix

\section{Numerical details not relevant to the main physics}

Here we only show the numerical details for the Sinai billiard to keep the conciseness of the paper. For the Stadium billiard, most results are parallel except that there is no overlapping behavior in the nonchaotic case of $\mathcal{N}_{max}=100$. So there is no overlapping behavior in its plots analogous to Part \textbf{[II]}  below, which has already be quantized in Figure~\ref{fig:st-core-Stadium} in the main context. 

\textbf{[I]:} Firstly an example of Lanczos coefficients $b_n$  is shown in  Figure~\ref{fig:B-bn-example-100},  for the case of GUE. There is a clear difference between the chaotic and the nonchaotic case, which is explored further by the average correlation matrix $\left \langle x_ix_j \right \rangle $. The set of coefficients $\left \{ b_n \right \} $ between the two red lines, $5\mathcal{N}_{max}\le n\le 10\mathcal{N}_{max}$, is selected to compute the variance $\sigma^2$.  Note that this selection of the set  does not qualitatively affect the result of this paper. The effect of selecting the set of  $\left \{ b_n \right \} $ from different algorithm step sizes is discussed in the end of this Appendix.

\textbf{[II]:} Then the variances $\sigma^2$  are shown in Figure~\ref{fig:billiard-sigmaG-behavior-100} for the samples of different distributions (GOE, GUE, URE, UIM and UCP) of initial operators, where $m$ labels the sample. Note that only the first 1000 out of the total 5000 samples are displayed for visual clarity, as the remaining samples exhibit similar behavior.   For GOE and GUE, they overlap completely in the nonchaotic case in Figure~\ref{fig:B-Gaussian-100-nonchaos}, and separate from each other in the chaotic case in Figure~\ref{fig:B-Gaussian-100-chaos}. Similary for the uniform distributions,  they mix together in the nonchaotic case in Figure~\ref{fig:B-uni-sigma-100-nonchaos}, and split into two separated groups in the chaotic case in Figure~\ref{fig:B-uni-sigma-100-chaos}, with URE and UIM being one group and UCP being the other.  

The histogram of all the sampled $\sigma^2$ is shown in Figure~\ref{fig:billiard-Histdistribution-100} and the resulting probability distribution resembles the  normal distribution. For the Gaussian ensembles,  we can see clearly that they are overlapped in the nonchaotic case  in Figure~\ref{fig:B-GaussianHist-100-nonchaos} and separated in the chaotic case  in Figure~\ref{fig:B-GaussianHist-100-chaos}. For the uniform distributions, they are overlapped in the nonchaotic case in Figure~\ref{fig:B-uniformHist-100-nonchaos}. In the  chaotic case in Figure~\ref{fig:B-uniformHist-100-chaos}, URE and UIM are still together, but they are separated from UCP. 

Now we show the data fitting  of the histogram in Figure~\ref{fig:billiard-Histdistribution-100} to the normal distribution. For GOE and GUE, the fitted distribution is shown in Figure~\ref{fig:billiard-fitted-distribution-Gaussian-100}. For uniform distributions, the fitted distribution is shown in Figure~\ref{fig:billiard-fitted-distribution-uniform-100}. The average value and the standard deviation are listed in Table~\ref{tab:billiard-fitted-data-100}, where we compare the values obtained from the sampled data and from the fitted normal distribution. 

Note that the corresponding results of the Stadium billiard are always separating whether the system is chaotic or not, that it they are similar to   Figure~\ref{fig:B-Gaussian-100-chaos} and Figure~\ref{fig:B-GaussianHist-100-chaos}. 

\textbf{[III]:} After that, let's verify the chi-square distribution of $\sigma^2$ for the case $\mathcal{N}_{max}=5$. 
The sets of coefficients $b_n$ are chosen from step-sizes $1<n<15$.  We repeat the computation  and show that there is neither normal distributions nor separating behaviors. 

The histogram of all $5000$ samples are shown in Figure~\ref{fig:billiard-hist-distribution-GE-5}. Obviously they are not normal distributions and are not separated in the  chaotic case. So the normal distribution is necessary for the separating behavior. 

We fit the histogram to the rescaled chi-square distribution, the results are shown in Figure~\ref{fig:billiard-his-distribution-fit-GE-5} for the Gaussian ensembles and in Figure~\ref{fig:billiard-fitted-distribution-uniform-5} for the uniform distributions. We see that the  rescaled chi-square distribution indeed captures the resulting distribution $f_{\sigma^2}$, so it verifies our analysis in Section~\ref{sec:variance-dist}. 

Note that the corresponding results of the Stadium billiard are the same here, that it they are overlapping and satisfy the rescaled chi square distribution.

\textbf{[IV]:} Finally we show the independence on the sets of coefficients used in computing $\sigma^2$. Let's choose the Lanczos coefficients  $\left \{ b_n \right \} $ from the new step-sizes $10\mathcal{N}_{max}\le n\le 15\mathcal{N}_{max}$ of Figure~\ref{fig:B-bn-example-100}. Firstly, we compute the variance $\sigma^2$~\eqref{eq:variance}, where the first $1000$ samples are shown in Figure~\ref{fig:billiard-sigmaG-choice-100}. We see that they mix together in the nonchaotic case and separate from each other in the chaotic case. Then, the histogram of all $5000$ samples are shown in Figure~\ref{fig:billiard-Histdistribution-choice-100}. Obviously they  still resemble normal distributions and are overlapped in the nonchaotic case and well-separated in the  chaotic case. Finally, the fitted normal distributions are shown in Figure~\ref{fig:billiard-fitted-distribution-Gaussian-choice-100} and~\ref{fig:billiard-fitted-distribution-uniform-choice-100}. So we still get the resulting normal distributions and the overlapping-splitting behaviors, although the detailed values of the average and the standard deviation change.

\begin{figure}[h]
        \centering
        \begin{subfigure}[b]{0.48\textwidth}
             \centering
             \includegraphics[width=1\linewidth]{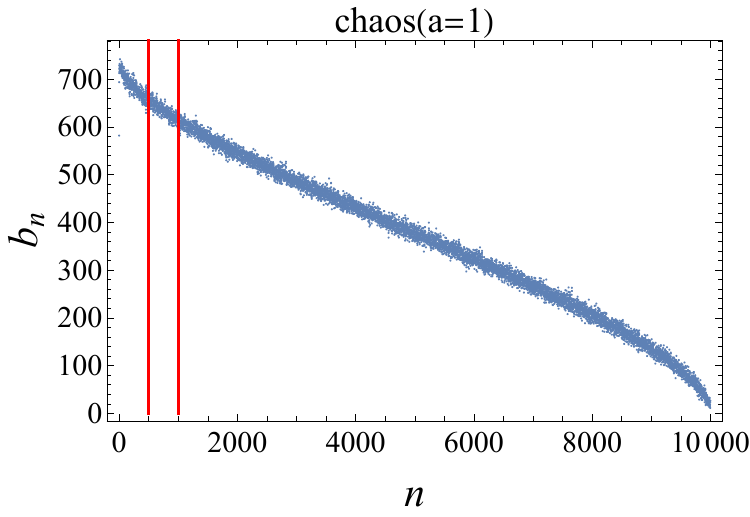} 
             \caption{$b_n$ in chaotic case $a=1$.}
             \label{fig:B-bn_GUE_chaos}
        \end{subfigure}
           \hfill
        \begin{subfigure}[b]{0.48\textwidth}
             \centering
             \includegraphics[width=1\linewidth]{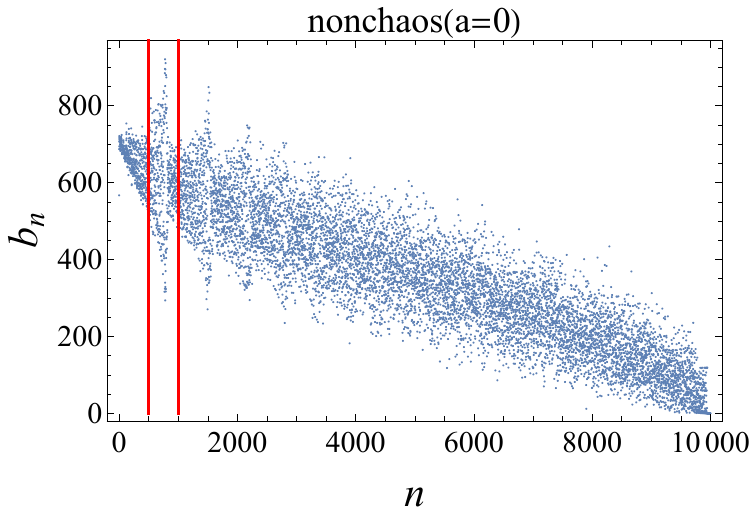} 
             \caption{$b_n$ in nonchaotic case $a=0$.}
             \label{fig:B-bn_GUE_nonchaos}
        \end{subfigure}     
        \caption{ The Lanczos coefficients  $\left \{ b_n \right \} $ for an initial operator selected from GUE. The sets of coefficients between the two red lines ($500\le n\le 1000$) are used to calculate the variance $\sigma^2$~\eqref{eq:variance}. }
        \label{fig:B-bn-example-100}
\end{figure}
% \begin{figure}[h][H]
% \centering
% \includegraphics[width=0.7\textwidth]{figures/bn-chaos-GUE.pdf} 
% \caption{$b_n$ behavior chaos GUE. The region between the two red lines is used to calculate the $\sigma^2$}
% \label{fig:B-bn_GUE_chaos}
% \end{figure}

% \begin{figure}[h][H]
% \centering
% \includegraphics[width=0.7\textwidth]{figures/bn-nonchaos-GUE.pdf} 
% \caption{$b_n$ behavior non-chaos GUE. The region between the two red lines is used to calculate the $\sigma^2$}
% \label{fig:B-bn_GUE_nonchaos}
% \end{figure}

\begin{figure}[h]
        \centering
        \begin{subfigure}[b]{0.48\textwidth}
                \centering
                \includegraphics[width=\linewidth]{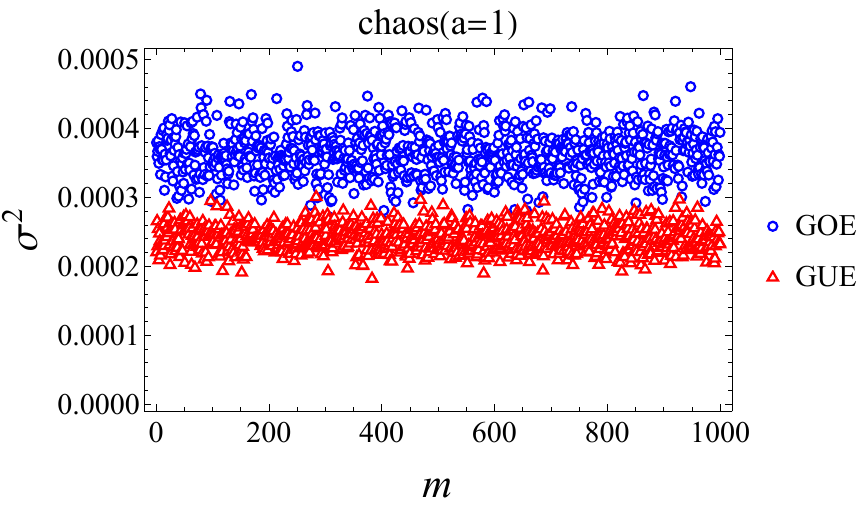}
                \caption{Sampled variances $\sigma^2$ of GOE and GUE in chaotic case $a=1$. }
                \label{fig:B-Gaussian-100-chaos}
        \end{subfigure}
           \hfill
        \begin{subfigure}[b]{0.48\textwidth}
                \centering
                \includegraphics[width=\linewidth]{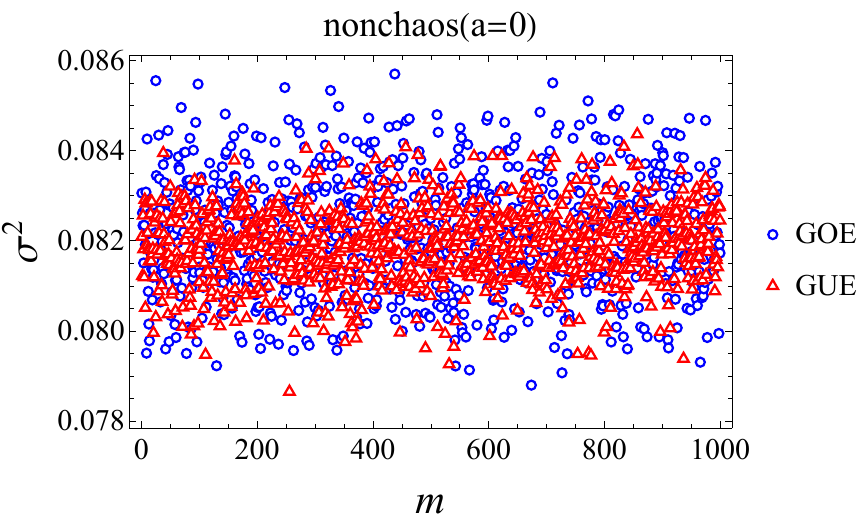}
                 \caption{Sampled variances $\sigma^2$ of GOE and GUE in nonchaotic case $a=0$. }
                 \label{fig:B-Gaussian-100-nonchaos}
        \end{subfigure}
        \begin{subfigure}[b]{0.48\textwidth}
                \centering
                \includegraphics[width=\linewidth]{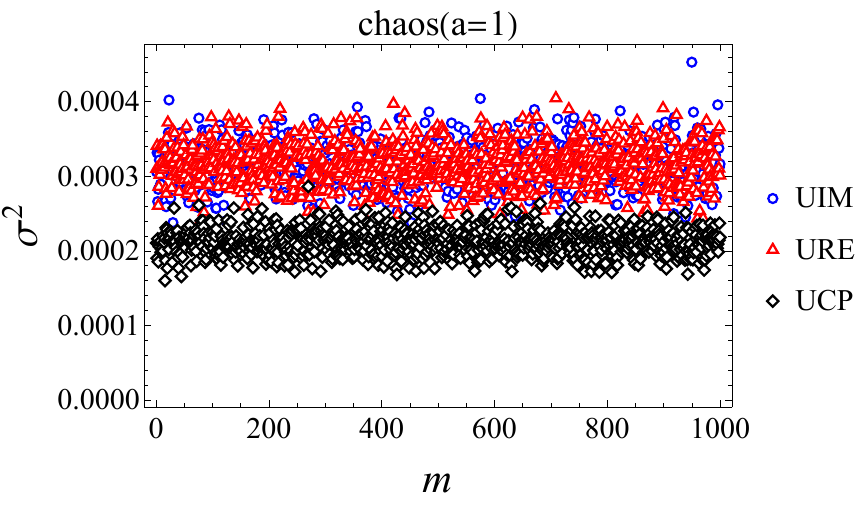}
                \caption{Sampled variances $\sigma^2$ of  uniform distributions in chaotic case $a=1$.}
                \label{fig:B-uni-sigma-100-chaos}
        \end{subfigure}
           \hfill
        \begin{subfigure}[b]{0.48\textwidth}
                \centering
                \includegraphics[width=\linewidth]{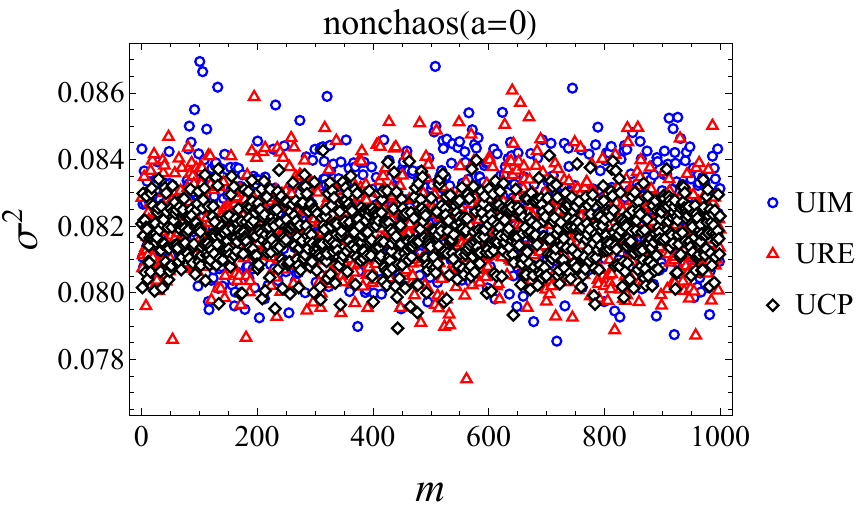}
                 \caption{Sampled variances $\sigma^2$ of uniform distributions in nonchaotic case $a=0$.}
                 \label{fig:B-uni-sigma-100-nonchaos}
        \end{subfigure}        
        \caption{Samples of  variances $\sigma^2$ for GOE, GUE, URE, UIM and UCP.  The horizontal axis represents the $m$th sampling. The chaotic $a=1$ and nonchatic case $a=0$ have different behaviors.  }
        \label{fig:billiard-sigmaG-behavior-100}
\end{figure}

\begin{figure}[h]
        \centering
        \begin{subfigure}[b]{0.48\textwidth}
                \centering
                \includegraphics[width=\linewidth]{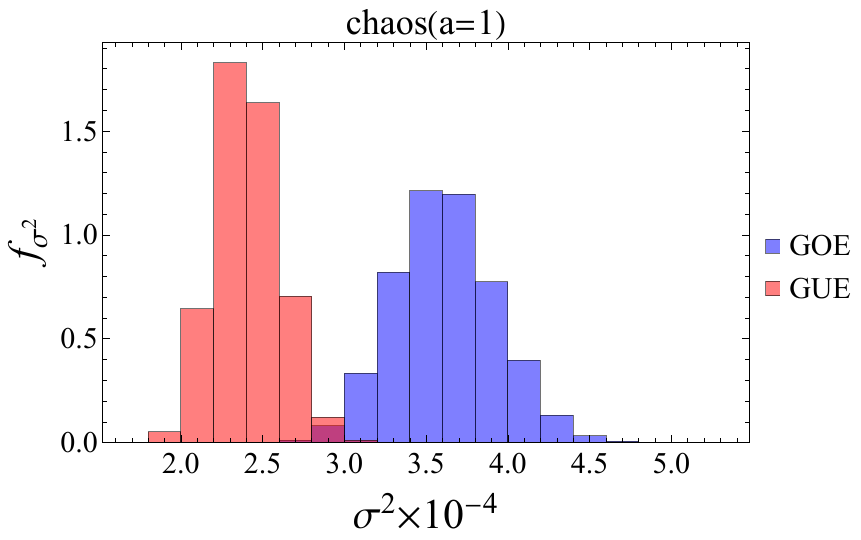}
                 \caption{Histogram of $\sigma^2$ of GOE and GUE in the chaotic case $a=1$.}
                 \label{fig:B-GaussianHist-100-chaos}
        \end{subfigure}
           \hfill
        \begin{subfigure}[b]{0.48\textwidth}
                \centering
                \includegraphics[width=\linewidth]{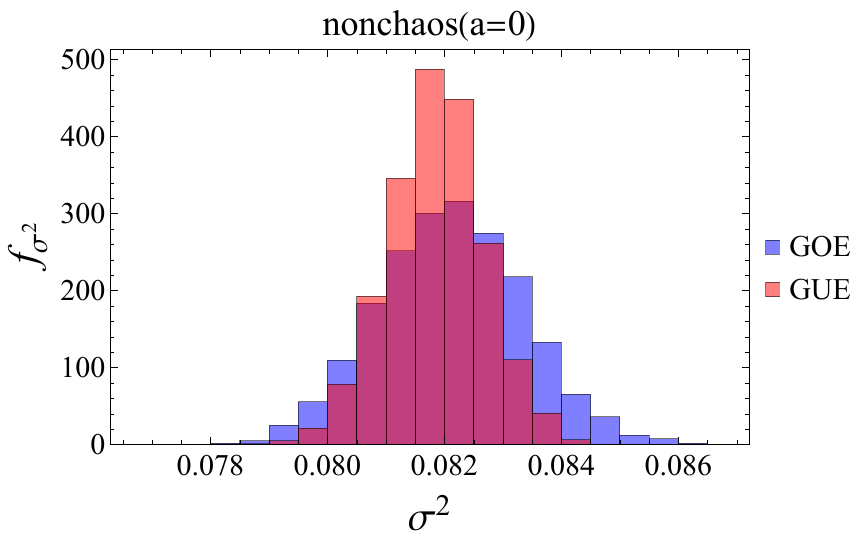}
                \caption{Histogram of $\sigma^2$ of GOE and GUE in the nonchaotic case $a=0$.}
                \label{fig:B-GaussianHist-100-nonchaos}
        \end{subfigure}
        \begin{subfigure}[b]{0.48\textwidth}
                \centering
                \includegraphics[width=\linewidth]{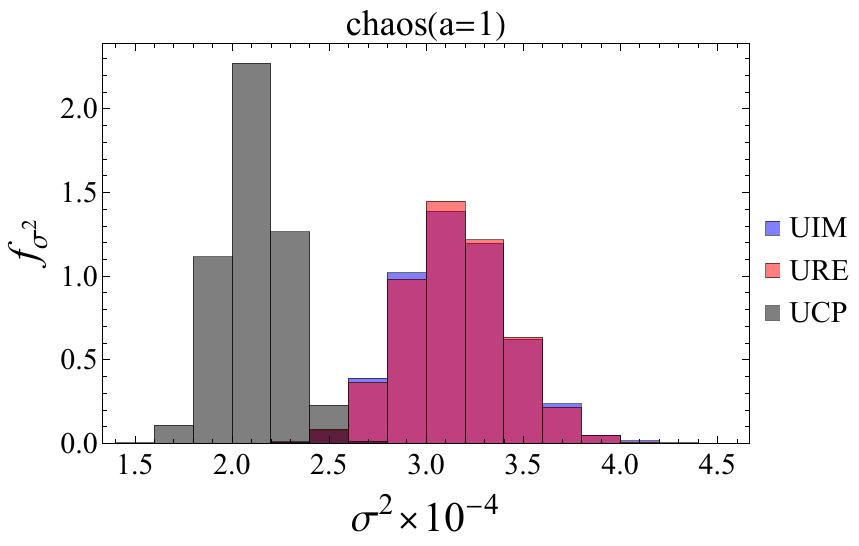}
                 \caption{Histogram of $\sigma^2$ of  uniform distributions in the chaotic case $a=1$.}
                 \label{fig:B-uniformHist-100-chaos}
        \end{subfigure}
           \hfill
        \begin{subfigure}[b]{0.48\textwidth}
                \centering
                \includegraphics[width=\linewidth]{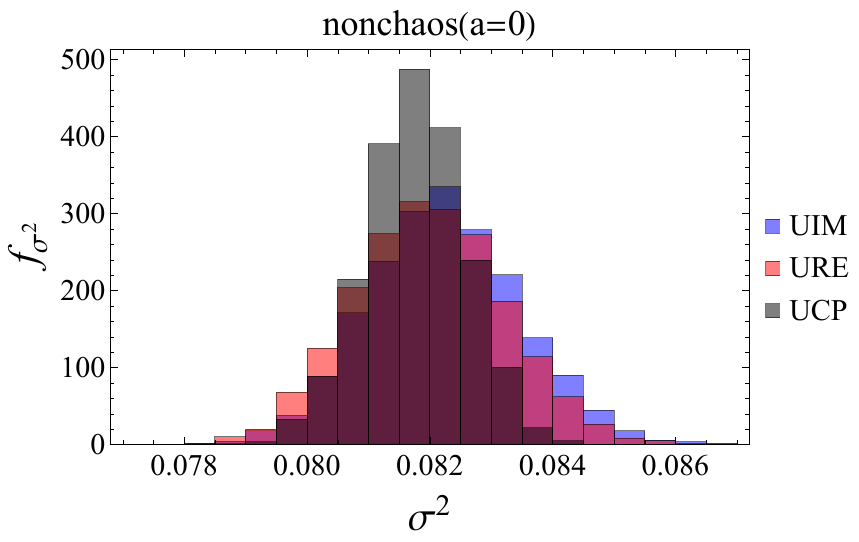}
                \caption{Histogram of $\sigma^2$ of  uniform distributions in the nonchaotic case $a=0$.}
                \label{fig:B-uniformHist-100-nonchaos}
        \end{subfigure}        
        \caption{The histogram of variances $\sigma^2$ after sampling the initial operator 5000 times from GOE, GUE, URE, UIM and UCP. The property of being normal and the overlapping-separating behavior are obvious. }
        \label{fig:billiard-Histdistribution-100}
\end{figure}

\begin{figure}[h]
        \centering
        \begin{subfigure}[b]{0.48\textwidth}
                \centering
                \includegraphics[width=\linewidth]{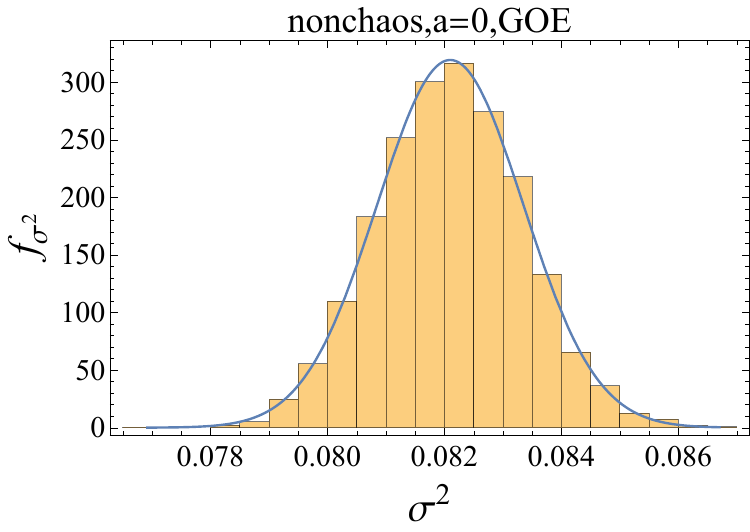}
                 \caption{Nonchaotic case $a=0$  of GOE. Skewness=0.0592927, Kurtosis=3.02995}
                 \label{fig:B-GOE-fit-nonchaos-100}
        \end{subfigure}
           \hfill
        \begin{subfigure}[b]{0.48\textwidth}
                \centering
                \includegraphics[width=\linewidth]{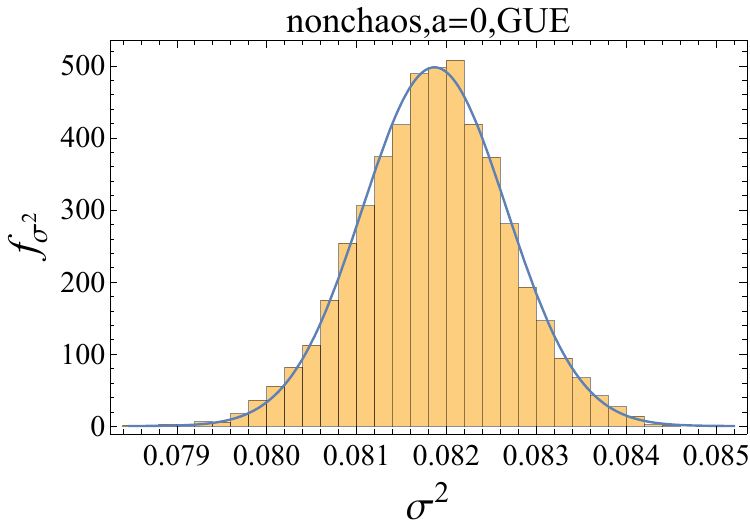}
                 \caption{Nonchaotic case $a=0$  of GUE. Skewness=-0.06833, Kurtosis=3.13681}
                 \label{fig:B-GUE-fit-nonchaos-100}
        \end{subfigure}
        \begin{subfigure}[b]{0.48\textwidth}
                \centering
                \includegraphics[width=\linewidth]{figures/100NchaosGOEfit.pdf}
                 \caption{Chaotic case $a=1$  of GOE. Skewness=0.201339, Kurtosis=3.19744}
                 \label{fig:B-GOE-fit-chaos-100}
        \end{subfigure}
           \hfill
        \begin{subfigure}[b]{0.48\textwidth}
                \centering
                \includegraphics[width=\linewidth]{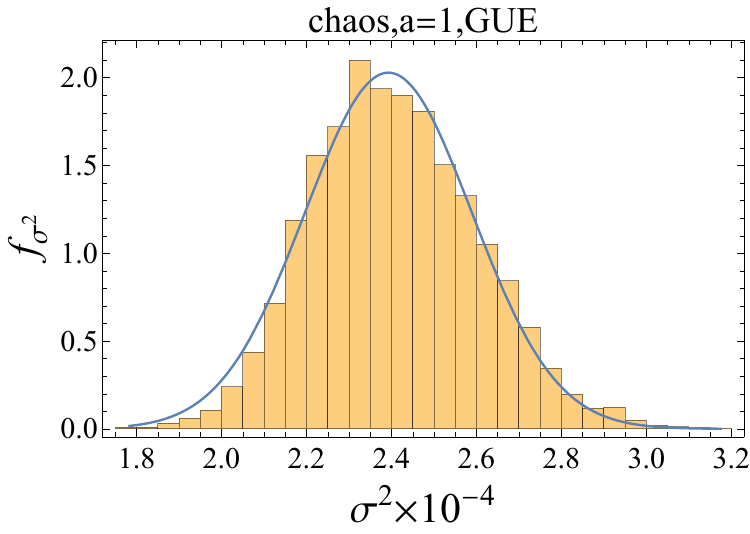}
                 \caption{Chaotic case $a=1$  of GUE. Skewness=0.247936, Kurtosis=2.98557}
                 \label{fig:B-GUE-fit-chaos-100}
        \end{subfigure}
        \caption{Fitted normal distribution for GOE and GUE, $\mathcal{N}_{max}=100$.}
        \label{fig:billiard-fitted-distribution-Gaussian-100}
\end{figure}

\begin{figure}[h]
        \centering
        \begin{subfigure}[b]{0.48\textwidth}
                \centering
                \includegraphics[width=\linewidth]{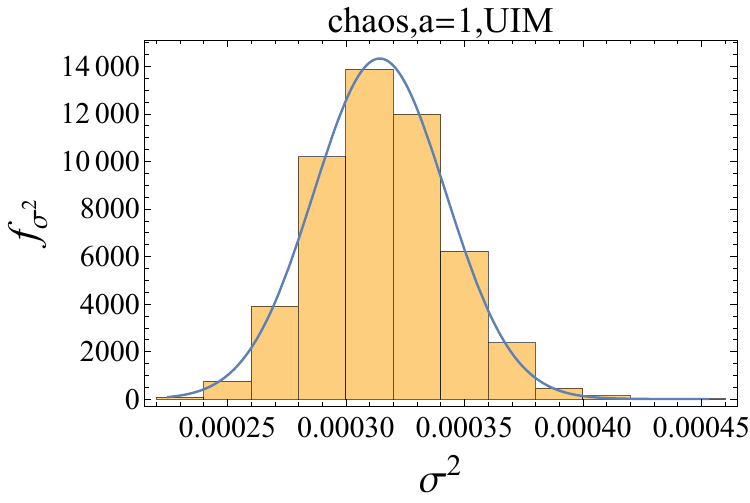}
                 \caption{Chaotic case $a=1$  of UIM. Skewness=0.225317, Kurtosis=3.18781.}
                 \label{fig:B-uni-fit-chaosIM-100}
        \end{subfigure}
           \hfill
        \begin{subfigure}[b]{0.48\textwidth}
                \centering
                \includegraphics[width=\linewidth]{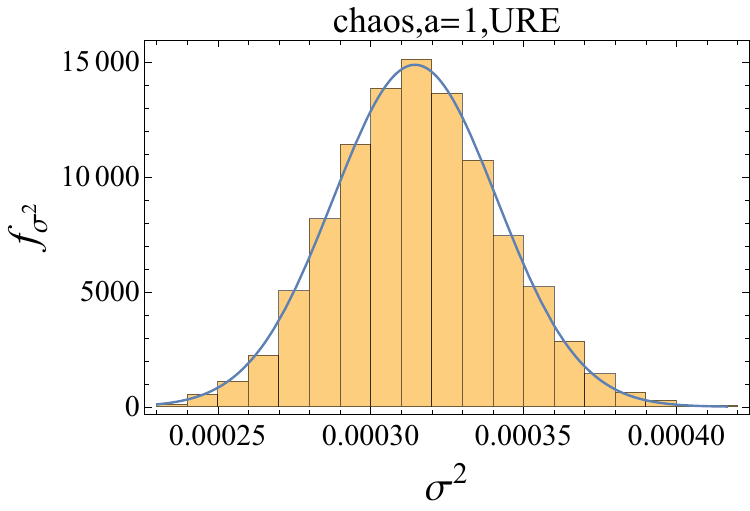}
                 \caption{Chaotic case $a=1$  of URE. Skewness=0.131034, Kurtosis=3.00302.}
                 \label{fig:B-uni-fit-chaosRE-100}
        \end{subfigure}
        \begin{subfigure}[b]{0.48\textwidth}
                \centering
                \includegraphics[width=\linewidth]{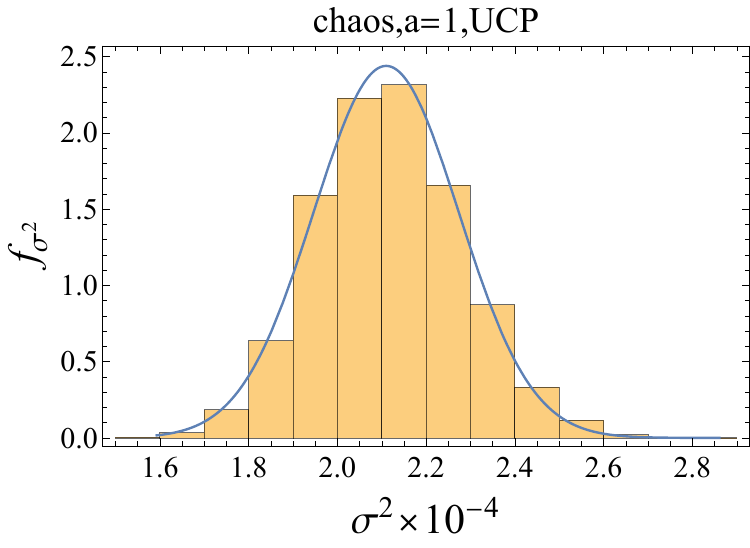}
                 \caption{Chaotic case $a=1$  of UCP. Skewness=0.167094, Kurtosis=3.06099.}
                 \label{fig:uni-fit-chaosCP-100}
        \end{subfigure}
           \hfill
        \begin{subfigure}[b]{0.48\textwidth}
                \centering
                \includegraphics[width=\linewidth]{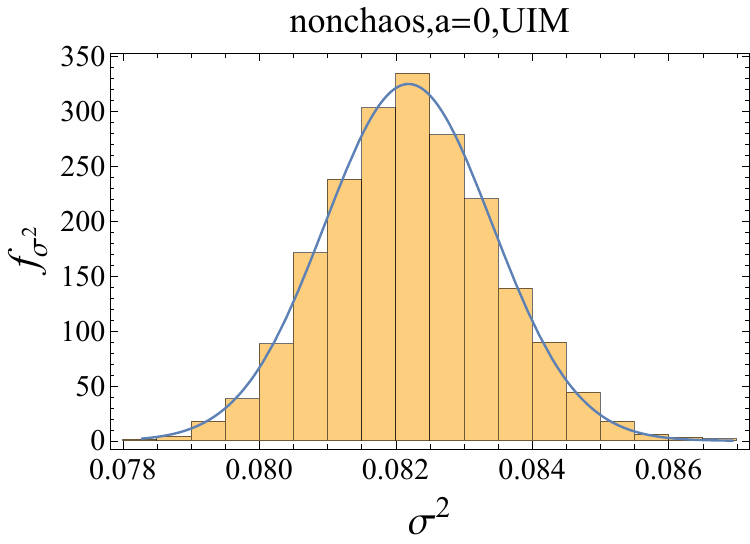}
                 \caption{Nonchaotic case $a=0$  of UIM. Skewness=0.1411, Kurtosis=3.08718.}
                 \label{fig:uni-fit-nonchaosIM-100}
        \end{subfigure}
        \begin{subfigure}[b]{0.48\textwidth}
                \centering
                \includegraphics[width=\linewidth]{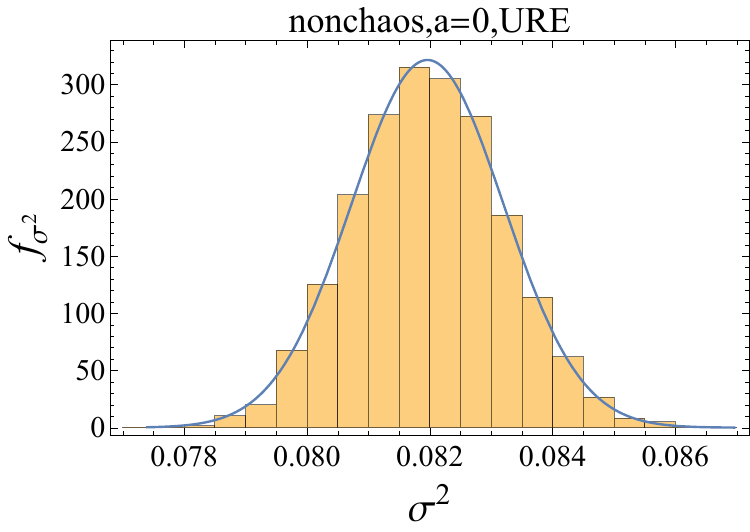}
                 \caption{Nonchaotic case $a=0$  of URE. Skewness=0.0823458, Kurtosis=2.97123.}
                 \label{fig:uni-fit-nonchaosRE-100}
        \end{subfigure}
           \hfill
        \begin{subfigure}[b]{0.48\textwidth}
                \centering
                \includegraphics[width=\linewidth]{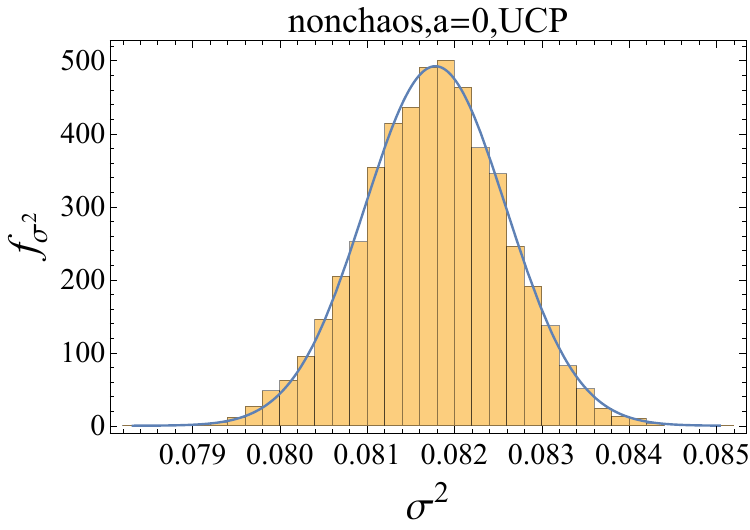}
                 \caption{Nonchaotic case $a=0$  of UCP. Skewness=-0.0801398, Kurtosis=3.04548.}
                 \label{fig:uni-fit-nonchaosCP-100}
        \end{subfigure}
        \caption{Fitted normal distribution for uniform distributions, $\mathcal{N}_{max}=100$.}
        \label{fig:billiard-fitted-distribution-uniform-100}
\end{figure}

\begin{table}[!htbp]
\centering
\begin{tabular}{ccccc}
\toprule
\multicolumn{3}{c}{\multirow{2}{*}{}} & \multicolumn{2}{c}{\textbf{$\mathcal{N}_{max}=100$}}\\
\cmidrule{4-5}
\multicolumn{3}{c}{}  & $\mu_0$ & $\sigma_0$ \\
\midrule 
\multirow{10}{*}{\textbf{nonchaos}} & \multirow{2}{*}{GOE}  & Fit & 0.08205   & 0.00125033 \\
 &    & Data   &   0.0821079   & 0.00124627\\
 \cmidrule{2-5}
 & \multirow{2}{*}{GUE}    & Fit   & 0.0818423   & 0.000799776\\
 &    & Data  &   0.0818522   & 0.000818971\\
 \cmidrule{2-5}
 & \multirow{2}{*}{URE}    & Fit   & 0.0819622 & 0.0012411\\
 &    & Data  &   0.0819846 & 0.0012243\\
  \cmidrule{2-5}
 & \multirow{2}{*}{UIM}    & Fit   & 0.0821883 & 0.00122778\\
 &    & Data   &   0.0822335 & 0.00124233\\
  \cmidrule{2-5}
 & \multirow{2}{*}{UCP}    & Fit   & 0.0817821& 0.000810266\\
 &    & Data   &   0.0817644& 0.000816963\\
\midrule 
\multirow{10}{*}{\textbf{chaos}} & \multirow{2}{*}{GOE}  & Fit  &  0.000360084 & 0.0000311866 \\
 &    & Data  &  0.000361536   & 0.0000312431\\
 \cmidrule{2-5}
 & \multirow{2}{*}{GUE}    & Fit  &   0.000239301 & 0.0000196699\\
 &    & Data  &   0.00024081   & 0.0000194523\\
 \cmidrule{2-5}
 & \multirow{2}{*}{URE}    & Fit   &   0.000314526& 0.0000268396\\
 &    & Data   &   0.000315565& 0.0000270371\\
 \cmidrule{2-5}
 & \multirow{2}{*}{UIM}    & Fit  &   0.000314332& 0.0000278555\\
 &    & Data   &   0.000315618& 0.0000277744\\
 \cmidrule{2-5}
 & \multirow{2}{*}{UCP}    & Fit   &   2.11016 & 0.163631\\
 &    & Data   &   2.11775 & 0.164331\\
\bottomrule
\end{tabular}
\caption{Average $\mu_0$ and standard deviation $\sigma_0$ for the data fitting, $\mathcal{N}_{max}=100$. 'Data' means the values computed from the sampled data. 'Fit' means the values of the fitted normal distribution. }
\label{tab:billiard-fitted-data-100}
\end{table}

\begin{figure}[h]
        \centering
        \begin{subfigure}[b]{0.48\textwidth}
                \centering
                \includegraphics[width=\linewidth]{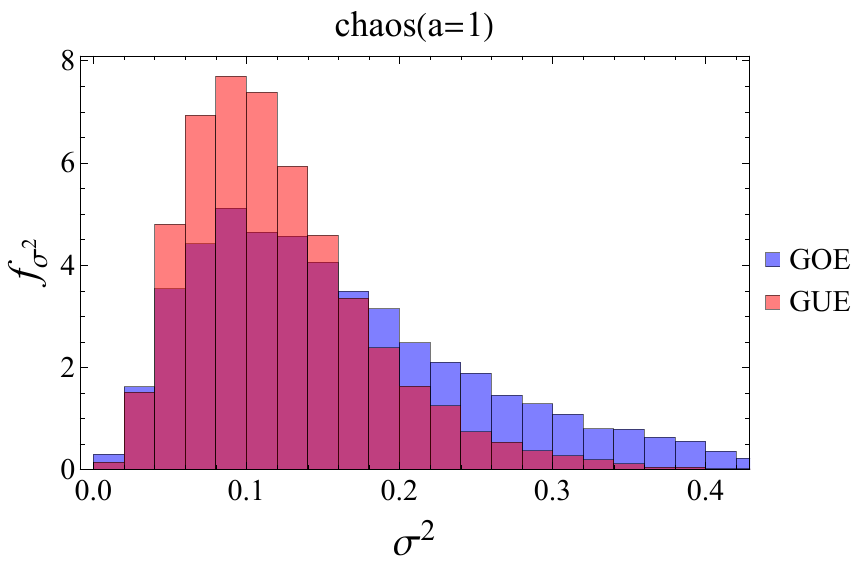}
                 \caption{Histogram of $\sigma^2$ of GOE and GUE in the chaotic case $a=1$.}
                 \label{fig:B-GaussianHist-5-chaos}
        \end{subfigure}
           \hfill
        \begin{subfigure}[b]{0.48\textwidth}
                \centering
                \includegraphics[width=\linewidth]{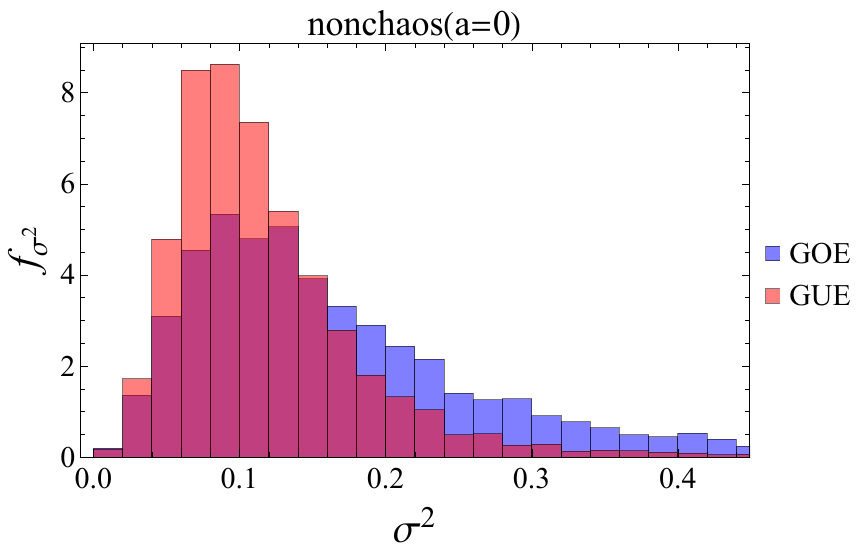}
                \caption{Histogram of $\sigma^2$ of GOE and GUE in the nonchaotic case $a=0$.}
                \label{fig:B-GaussianHist-5-nonchaos}
        \end{subfigure}
        \begin{subfigure}[b]{0.48\textwidth}
                \centering
                \includegraphics[width=\linewidth]{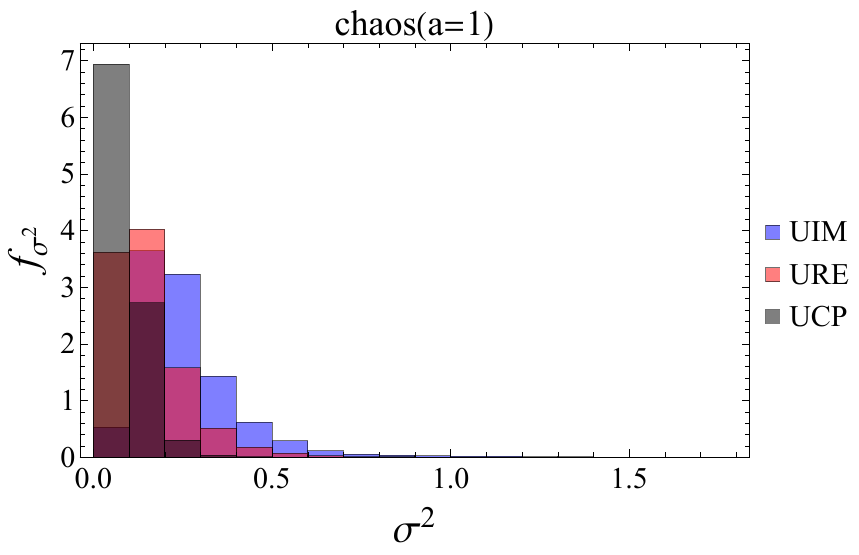}
                 \caption{Histogram of $\sigma^2$ of  uniform distributions in the chaotic case $a=1$.}
                 \label{fig:B-uniformHist-5-chaos}
        \end{subfigure}
           \hfill
        \begin{subfigure}[b]{0.48\textwidth}
                \centering
                \includegraphics[width=\linewidth]{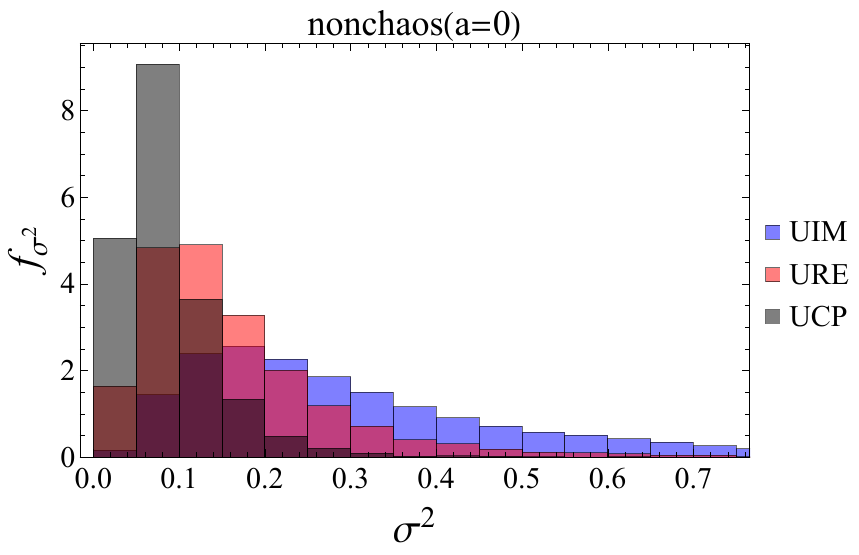}
                \caption{Histogram of $\sigma^2$ of  uniform distributions in the nonchaotic case $a=0$.}
                \label{fig:B-uniformHist-5-nonchaos}
        \end{subfigure}
        \caption{The histogram of variances $\sigma^2$ after sampling the initial operator 5000 times from GOE, GUE, URE, UIM and UCP, with $\mathcal{N}_{max}=5$. Obviously they are not normal distributions.}
        \label{fig:billiard-hist-distribution-GE-5}
\end{figure}

\begin{figure}[h]
        \centering
        \begin{subfigure}[b]{0.48\textwidth}
                \centering
                \includegraphics[width=\linewidth]{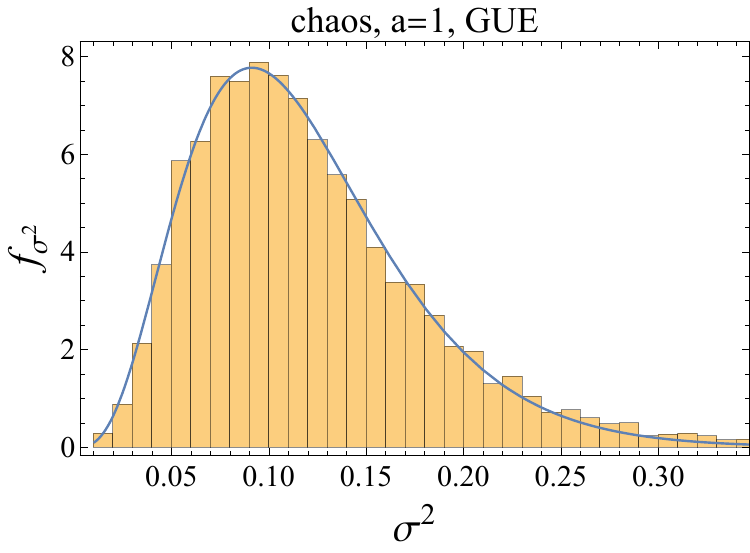}
                 \caption{$N=5$ chaos, GUE. Skewness=1.19843, Kurtosis=5.18562.}
                 \label{fig:B-GaussianHistGUE-fit-5-chaos}
        \end{subfigure}
           \hfill
        \begin{subfigure}[b]{0.48\textwidth}
                \centering
                \includegraphics[width=\linewidth]{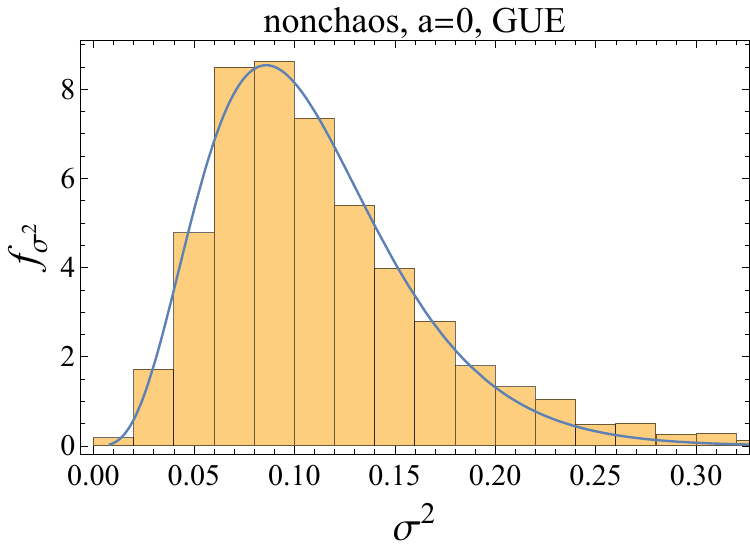}
                \caption{$N=5$ nonchaos, GUE. Skewness=4.30801, Kurtosis=50.7359.}
                \label{fig:B-GaussianHistGUE-fit-5-nonchaos}
        \end{subfigure}
        \begin{subfigure}[b]{0.48\textwidth}
                \centering
                \includegraphics[width=\linewidth]{figures/5N_GOE_fit_chaos.pdf}
                 \caption{$N=5$ chaos, GOE. Skewness=1.6231, Kurtosis=7.05429.}
                 \label{fig:B-GaussianHistGOE-fit-5-chaos}
        \end{subfigure}
           \hfill
        \begin{subfigure}[b]{0.48\textwidth}
                \centering
                \includegraphics[width=\linewidth]{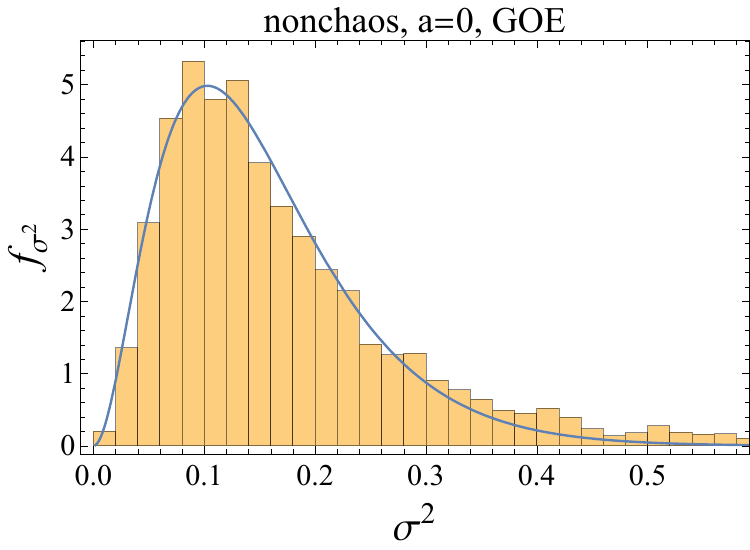}
                \caption{$N=5$ nonchaos, GOE. Skewness=5.83862, Kurtosis=73.6201.}
                \label{fig:B-GaussianHistGOE-fit-5-nonchaos}
        \end{subfigure}
        \caption{Data fitting of rescaled chi-square distribution for GOE and GUE, $\mathcal{N}_{max}=5$. }
        \label{fig:billiard-his-distribution-fit-GE-5}
\end{figure}

\begin{figure}[h]
        \centering
        \begin{subfigure}[b]{0.48\textwidth}
                \centering
                \includegraphics[width=\linewidth]{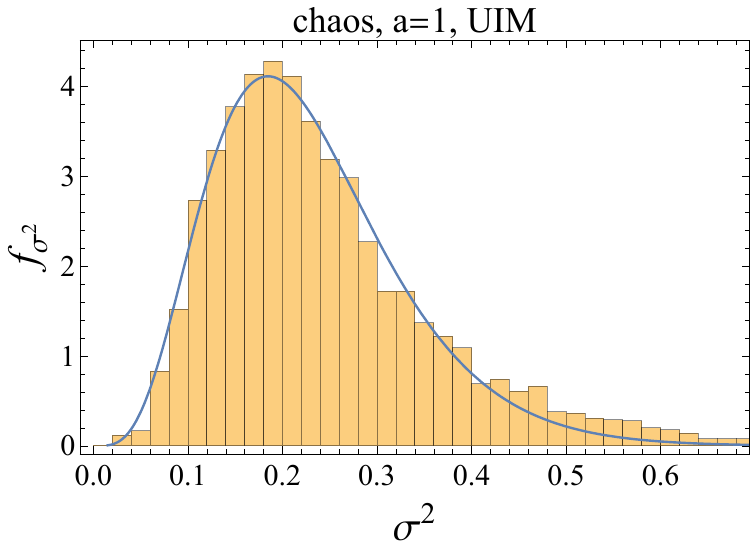}
                 \caption{Chaotic case $a=1$  of UIM. Skewness=0.306957, Kurtosis=3.26486.}
                 \label{fig:B-uni-fit-chaosIM-5}
        \end{subfigure}
           \hfill
        \begin{subfigure}[b]{0.48\textwidth}
                \centering
                \includegraphics[width=\linewidth]{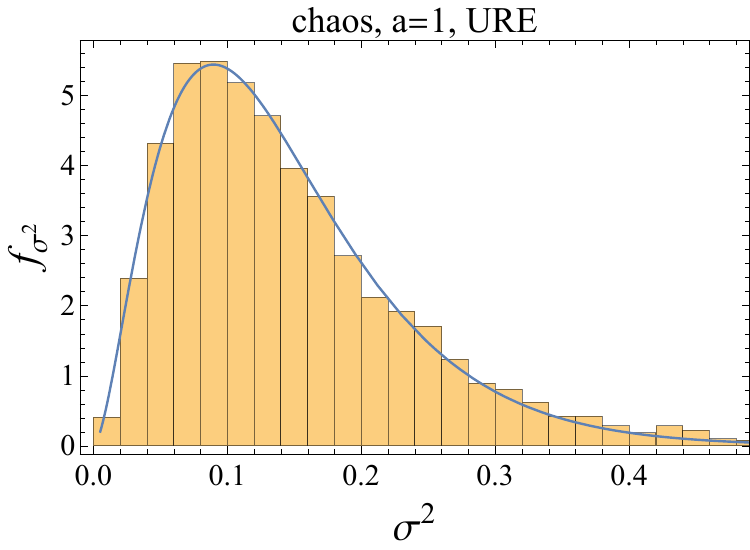}
                 \caption{Chaotic case $a=1$  of URE. Skewness=0.234259, Kurtosis=2.88038.}
                 \label{fig:B-uni-fit-chaosRE-5}
        \end{subfigure}
        \begin{subfigure}[b]{0.48\textwidth}
                \centering
                \includegraphics[width=\linewidth]{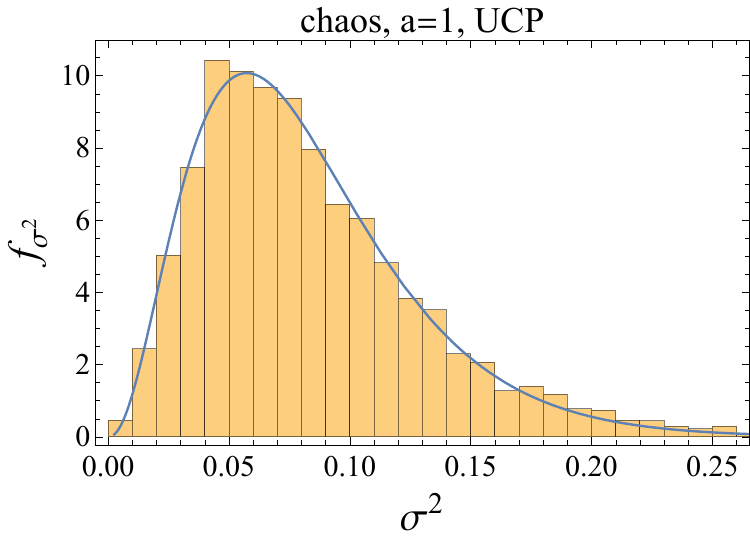}
                 \caption{Chaotic case $a=1$  of UCP. Skewness=0.207669, Kurtosis=2.97609.}
                 \label{fig:uni-fit-chaosCP-5}
        \end{subfigure}x
           \hfill
        \begin{subfigure}[b]{0.48\textwidth}
                \centering
                \includegraphics[width=\linewidth]{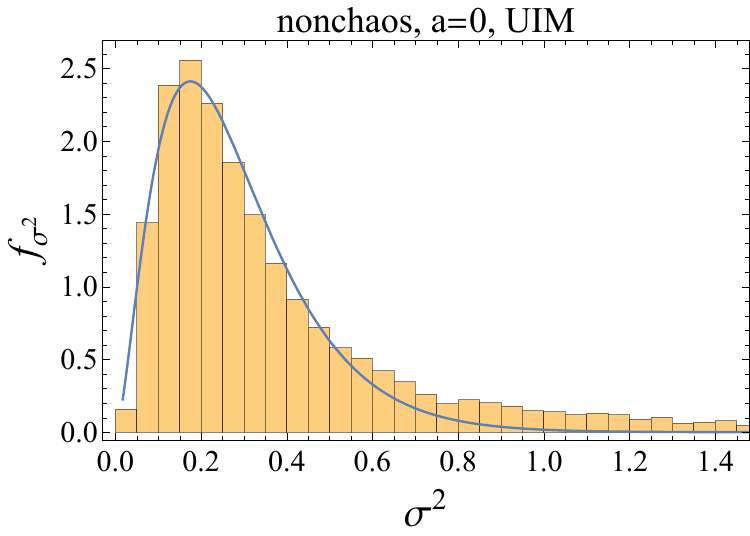}
                 \caption{Nonchaotic case $a=0$  of UIM. Skewness=0.0715752, Kurtosis=2.99478.}
                 \label{fig:uni-fit-nonchaosIM-5}
        \end{subfigure}
        \begin{subfigure}[b]{0.48\textwidth}
                \centering
                \includegraphics[width=\linewidth]{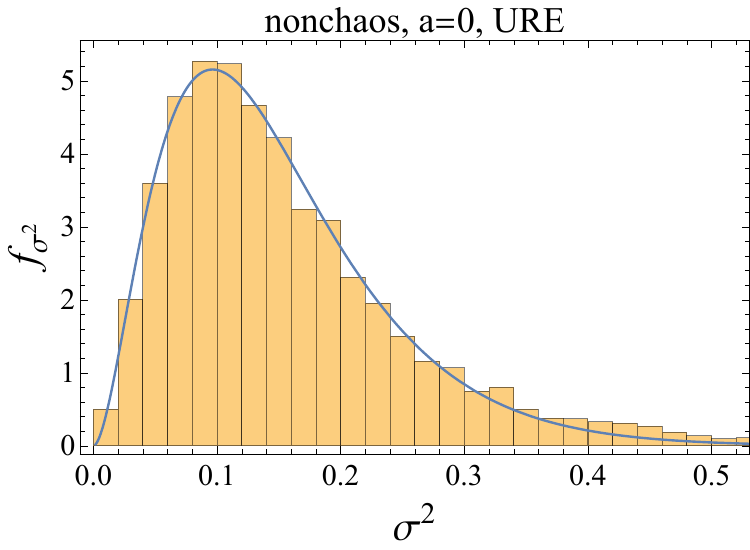}
                 \caption{Nonchaotic case $a=0$  of URE. Skewness=0.0339498, Kurtosis=3.07535.}
                 \label{fig:uni-fit-nonchaosRE-5}
        \end{subfigure}
           \hfill
        \begin{subfigure}[b]{0.48\textwidth}
                \centering
                \includegraphics[width=\linewidth]{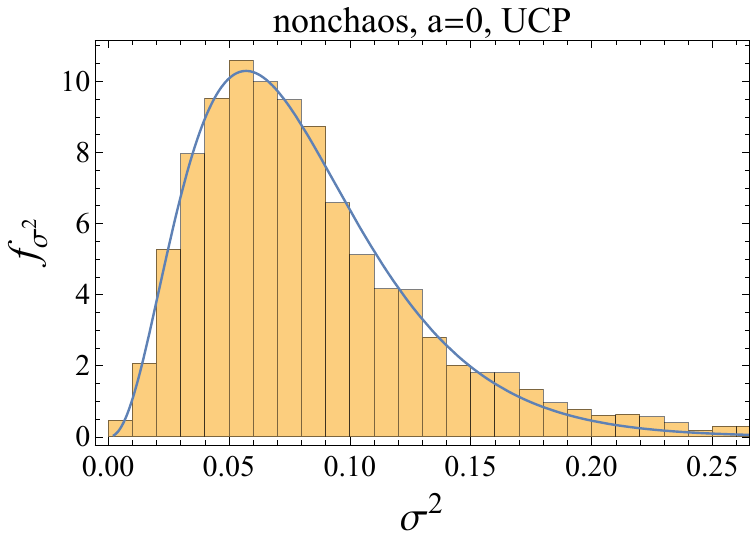}
                 \caption{Nonchaotic case $a=0$  of UCP. Skewness=-0.135062, Kurtosis=3.25632.}
                 \label{fig:uni-fit-nonchaosCP-5}
        \end{subfigure}
        \caption{Data fitting of rescaled chi-square distribution for uniform distributions, $\mathcal{N}_{max}=5$.}
        \label{fig:billiard-fitted-distribution-uniform-5}
\end{figure}

\begin{figure}[h]
        \centering
        \begin{subfigure}[b]{0.48\textwidth}
                \centering
                \includegraphics[width=\linewidth]{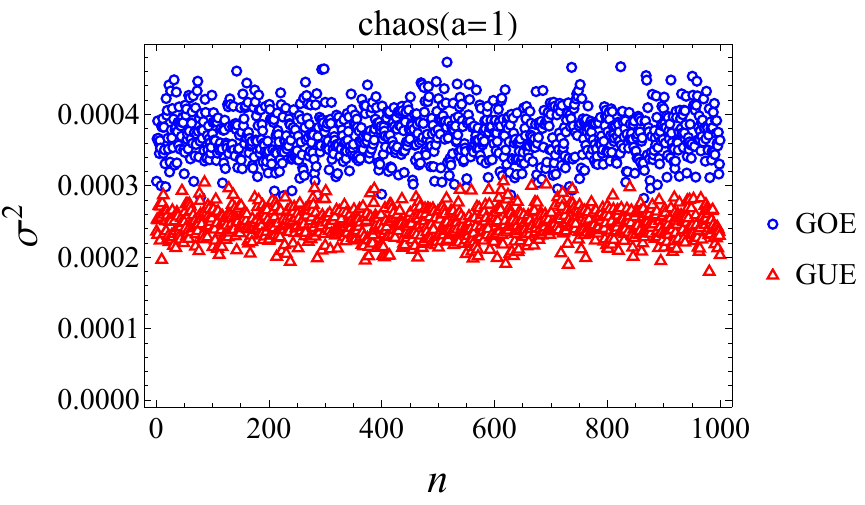}
                \caption{Sampled variances $\sigma^2$ of GOE and GUE in chaotic case $a=1$. }
                \label{fig:B-sigmaG-choice-100-chaos}
        \end{subfigure}
           \hfill
        \begin{subfigure}[b]{0.48\textwidth}
                \centering
                \includegraphics[width=\linewidth]{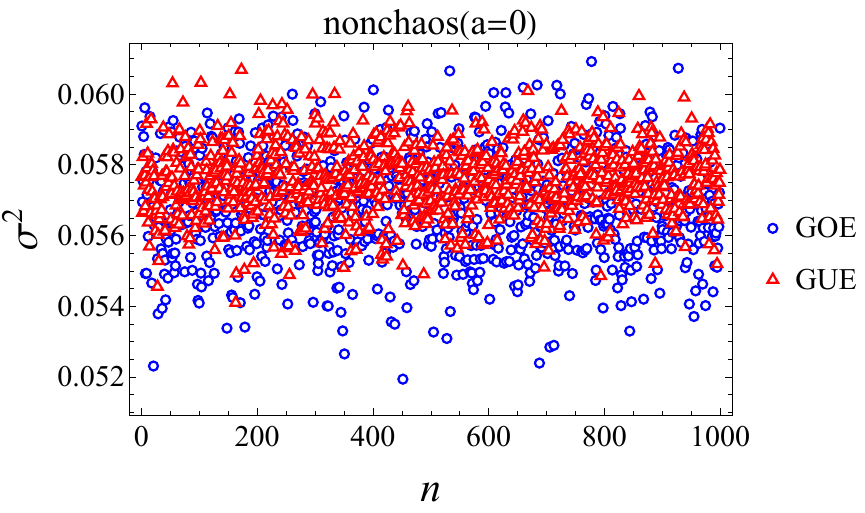}
                 \caption{Sampled variances $\sigma^2$ of GOE and GUE in nonchaotic case $a=0$. }
                 \label{fig:B-sigmaG-choice-100-nonchaos}
        \end{subfigure}
        \begin{subfigure}[b]{0.48\textwidth}
                \centering
                \includegraphics[width=\linewidth]{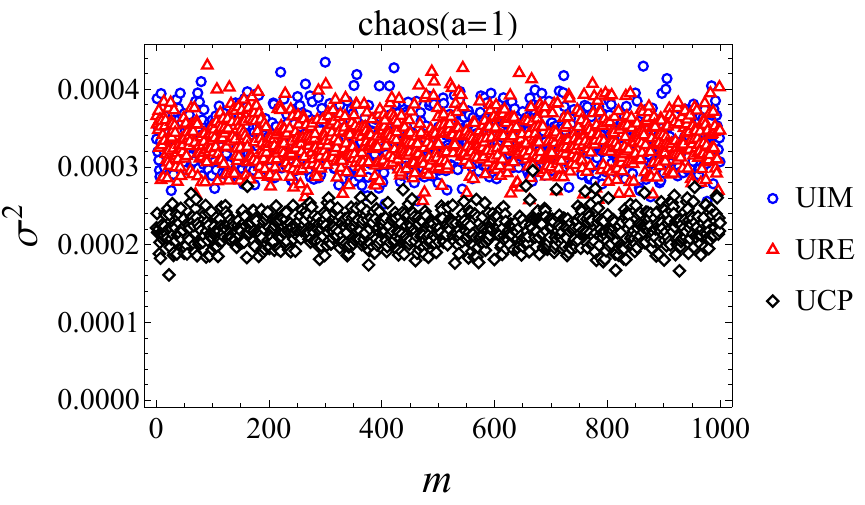}
                \caption{Sampled variances $\sigma^2$ of  uniform distributions in chaotic case $a=1$.}
                \label{fig:B-uni-choice-sigma-100-chaos}
        \end{subfigure}
           \hfill
        \begin{subfigure}[b]{0.48\textwidth}
                \centering
                \includegraphics[width=\linewidth]{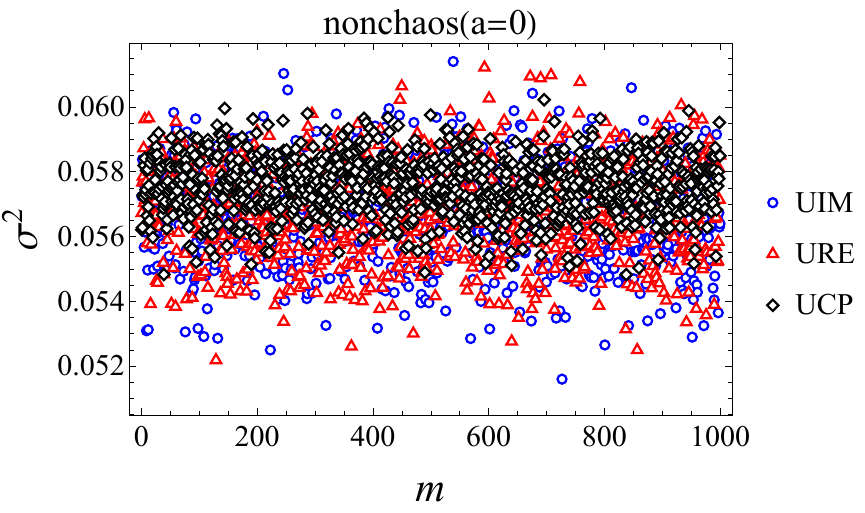}
                 \caption{Sampled variances $\sigma^2$ of uniform distributions in nonchaotic case $a=0$.}
                 \label{fig:B-uni-choice-sigma-100-nonchaos}
        \end{subfigure}
        \caption{The first $1000$ samples of  variances $\sigma^2$ for GOE, GUE, URE, UIM and UCP. Here the Lanczos coefficients  $\left \{ b_n \right \} $ from step-sizes $1000\le n\le 1500$ are used. }
        \label{fig:billiard-sigmaG-choice-100}
\end{figure}

\begin{figure}[h]
        \centering
        \begin{subfigure}[b]{0.48\textwidth}
                \centering
                \includegraphics[width=\linewidth]{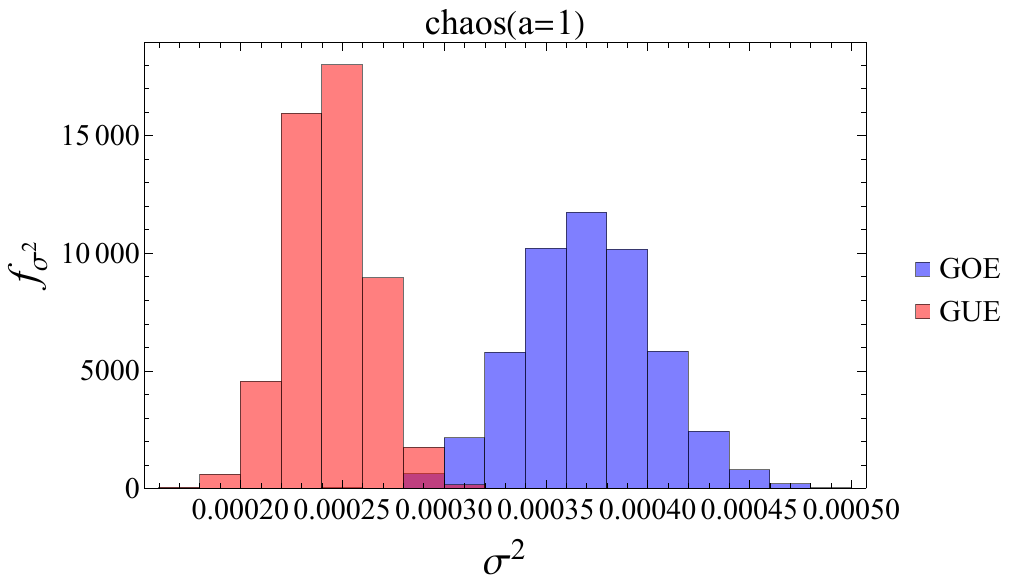}
                 \caption{Histogram of $\sigma^2$ of GOE and GUE in the chaotic case $a=1$.}
                 \label{fig:B-GaussianHist-choice-100-chaos}
        \end{subfigure}
           \hfill
        \begin{subfigure}[b]{0.48\textwidth}
                \centering
                \includegraphics[width=\linewidth]{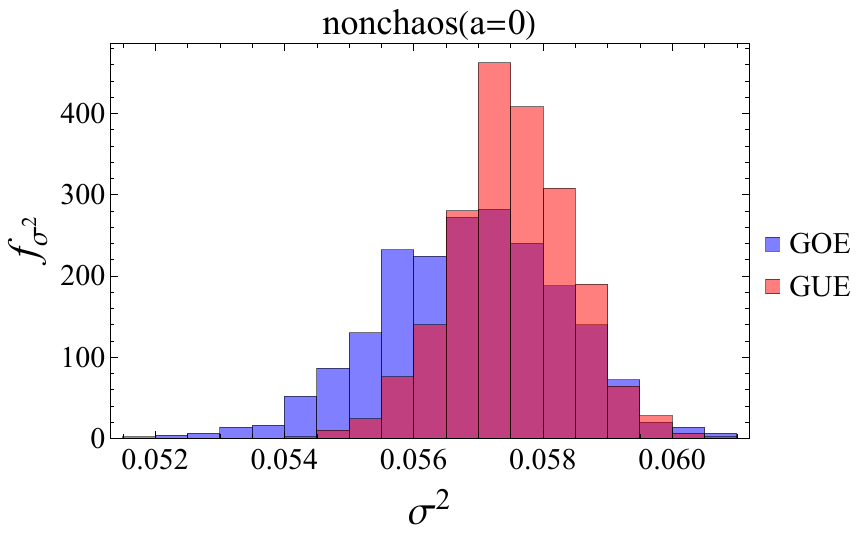}
                \caption{Histogram of $\sigma^2$ of GOE and GUE in the chaotic case $a=1$.}
                \label{fig:B-GaussianHist-choice-100-nonchaos}
        \end{subfigure}
        \begin{subfigure}[b]{0.48\textwidth}
                \centering
                \includegraphics[width=\linewidth]{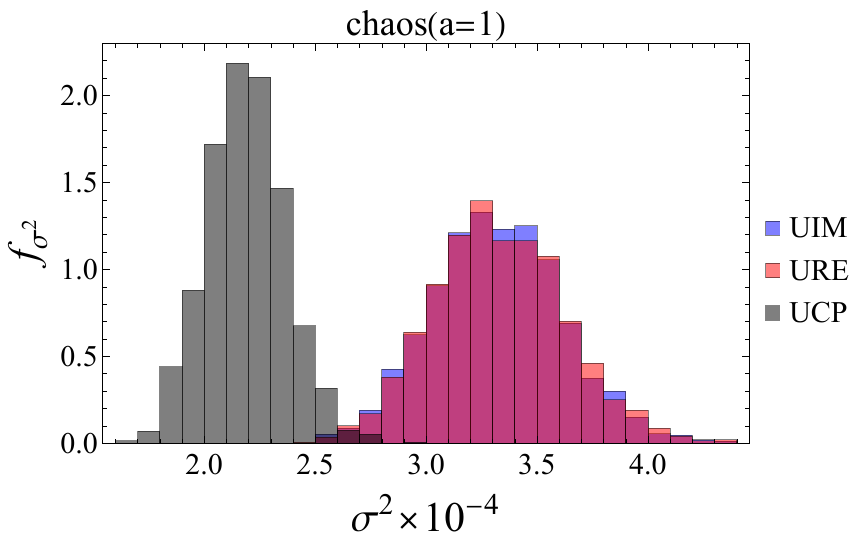}
                 \caption{Histogram of $\sigma^2$ of  uniform distributions in the chaotic case $a=1$.}
                 \label{fig:B-choice-uniformHist-100-chaos}
        \end{subfigure}
           \hfill
        \begin{subfigure}[b]{0.48\textwidth}
                \centering
                \includegraphics[width=\linewidth]{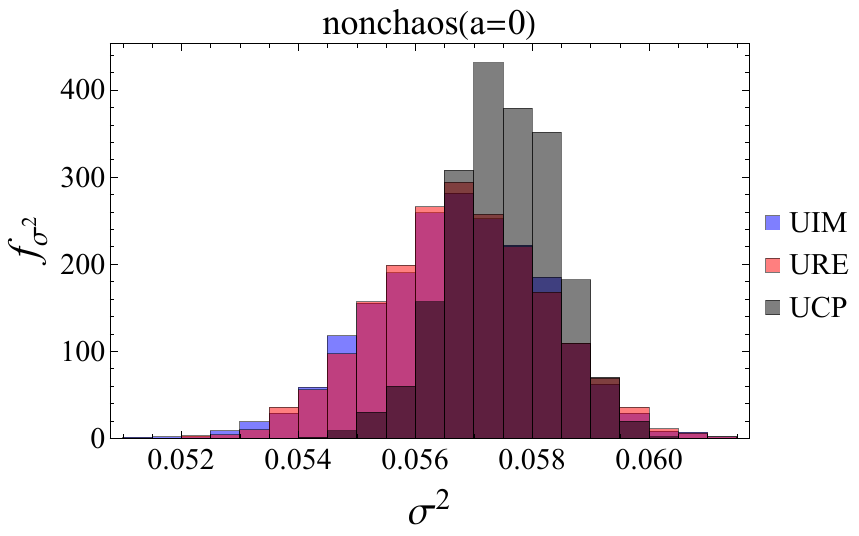}
                \caption{Histogram of $\sigma^2$ of  uniform distributions in the nonchaotic case $a=0$.}
                \label{fig:B-choice-uniformHist-100-nonchaos}
        \end{subfigure}        
        \caption{The histogram of variances $\sigma^2$ after sampling the initial operator 5000 times from GOE, GUE, URE, UIM and UCP. The Lanczos coefficients  $\left \{ b_n \right \} $ from step-sizes $1000\le n\le 1500$ are used. They are overlapped in the nonchaotic case and well-separated in the chaotic case.}
        \label{fig:billiard-Histdistribution-choice-100}
\end{figure}

\begin{figure}[h]
        \centering
        \begin{subfigure}[b]{0.48\textwidth}
                \centering
                \includegraphics[width=\linewidth]{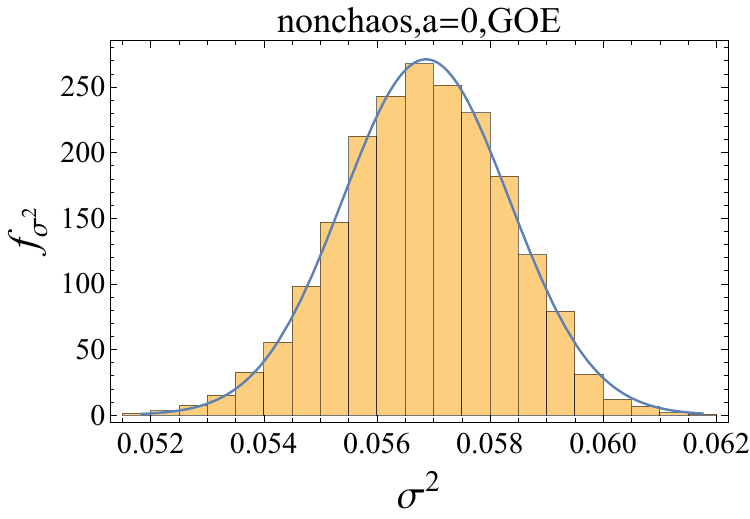}
                 \caption{Nonchaotic case $a=0$  of GOE. Skewness=-0.113535, Kurtosis=2.94238. }
                 \label{fig:B-GOE-fit-nonchaos-choice-100}
        \end{subfigure}
           \hfill
        \begin{subfigure}[b]{0.48\textwidth}
                \centering
                \includegraphics[width=\linewidth]{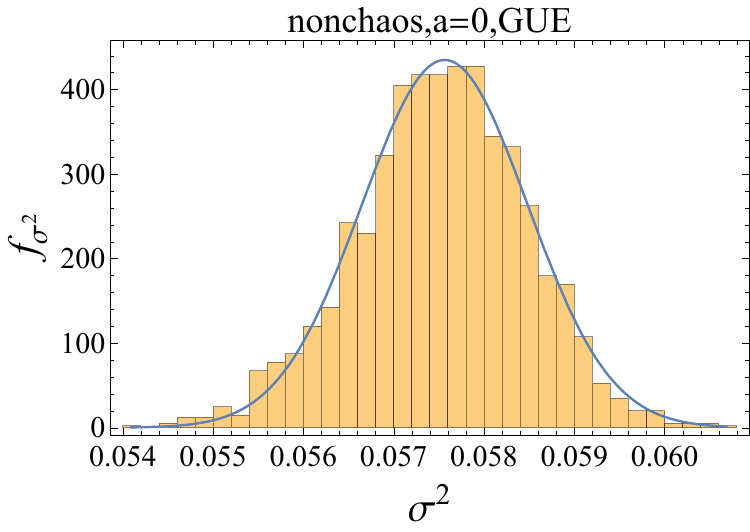}
                 \caption{Nonchaotic case $a=0$  of GUE. Skewness=-0.156533, Kurtosis=3.16475.}
                 \label{fig:B-GUE-fit-nonchaos-choice-100}
        \end{subfigure}
        \begin{subfigure}[b]{0.48\textwidth}
                \centering
                \includegraphics[width=\linewidth]{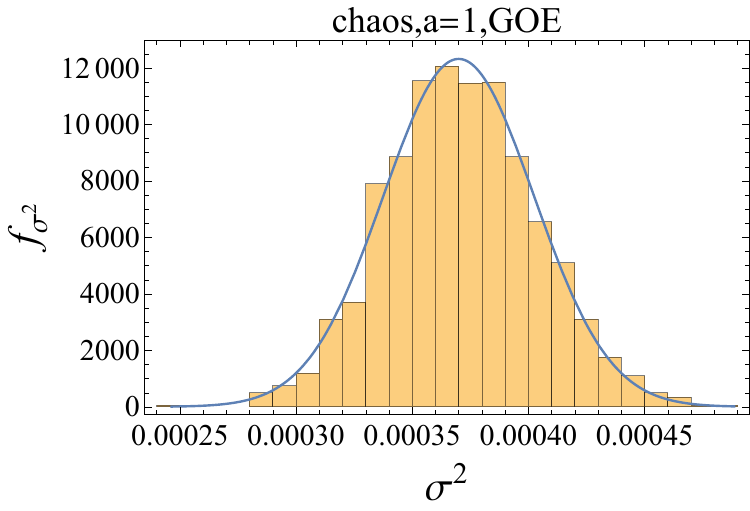}
                 \caption{Chaotic case $a=1$  of GOE. Skewness=0.107514, Kurtosis=3.01354.}
                 \label{fig:B-GOE-fit-chaos-choice-100}
        \end{subfigure}
           \hfill
        \begin{subfigure}[b]{0.48\textwidth}
                \centering
                \includegraphics[width=\linewidth]{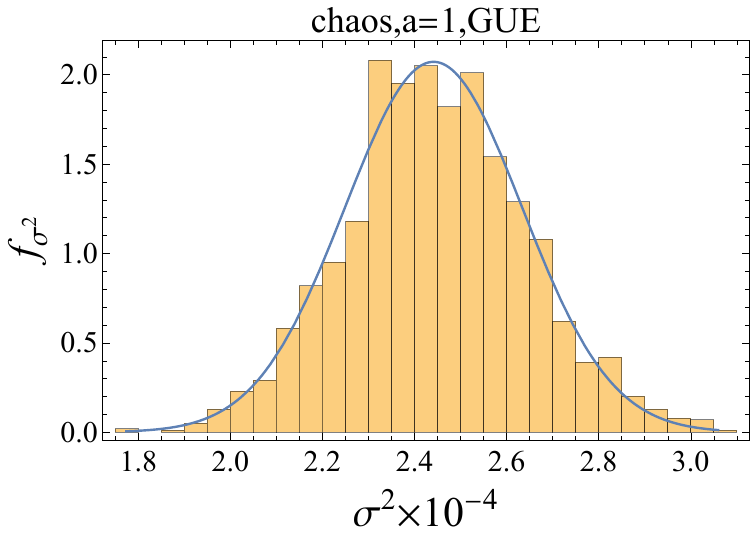}
                 \caption{Chaotic case $a=1$  of GUE. Skewness=0.0881284, Kurtosis=3.06162.}
                 \label{fig:B-GUE-fit-chaos-choice-100}
        \end{subfigure}
        \caption{Fitted normal distribution for GOE and GUE, $\mathcal{N}_{max}=100$.  The Lanczos coefficients  $\left \{ b_n \right \} $ from step-sizes $1000\le n\le 1500$ are used.}
        \label{fig:billiard-fitted-distribution-Gaussian-choice-100}
\end{figure}

\begin{figure}[h]
        \centering
        \begin{subfigure}[b]{0.48\textwidth}
                \centering
                \includegraphics[width=\linewidth]{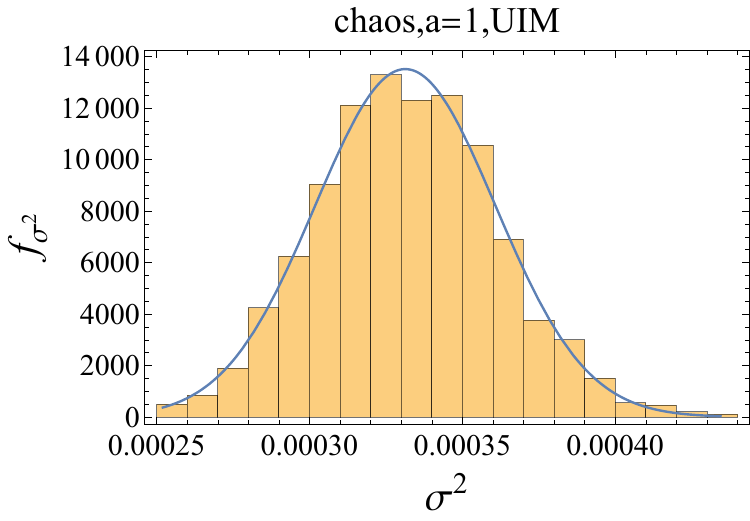}
                 \caption{Chaotic case $a=1$  of UIM. Skewness=0.160477, Kurtosis=3.02122.}
                 \label{fig:B-uni-fit-chaosIM-choice-100}
        \end{subfigure}
           \hfill
        \begin{subfigure}[b]{0.48\textwidth}
                \centering
                \includegraphics[width=\linewidth]{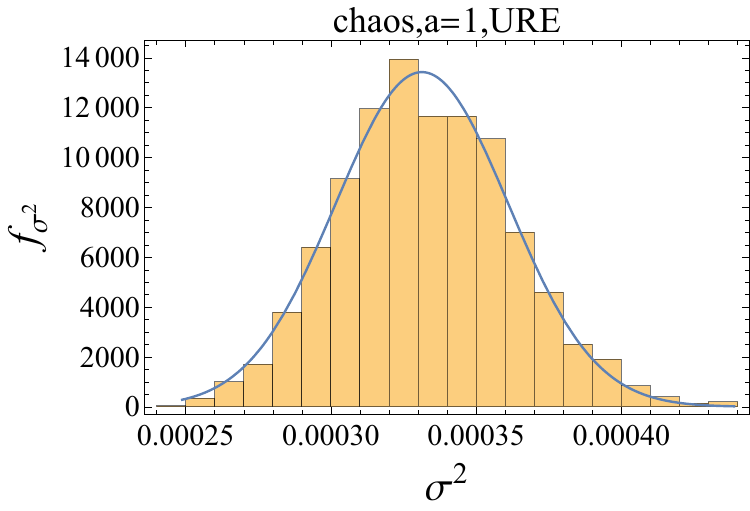}
                 \caption{Chaotic case $a=1$  of URE. Skewness=0.200908, Kurtosis=3.02255.}
                 \label{fig:B-uni-fit-chaosRE-choice-100}
        \end{subfigure}
        \begin{subfigure}[b]{0.48\textwidth}
                \centering
                \includegraphics[width=\linewidth]{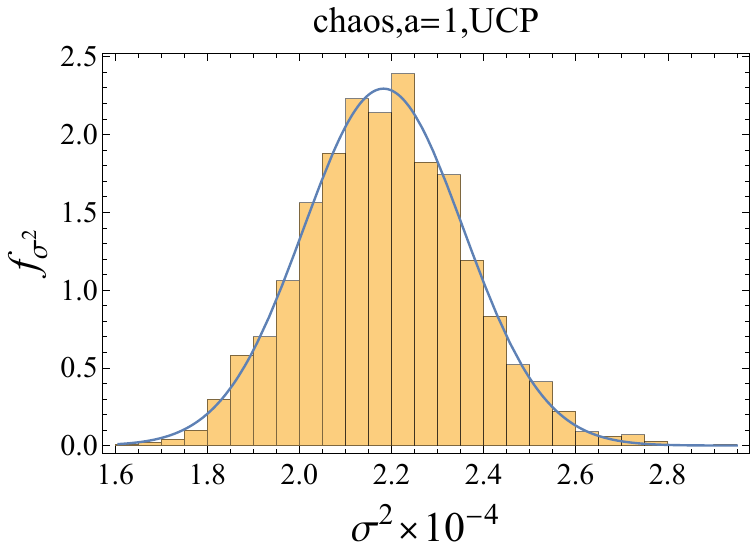}
                 \caption{Chaotic case $a=1$  of UCP. Skewness=0.181879, Kurtosis=3.18625.}
                 \label{fig:uni-fit-chaosCP-choice-100}
        \end{subfigure}
           \hfill
        \begin{subfigure}[b]{0.48\textwidth}
                \centering
                \includegraphics[width=\linewidth]{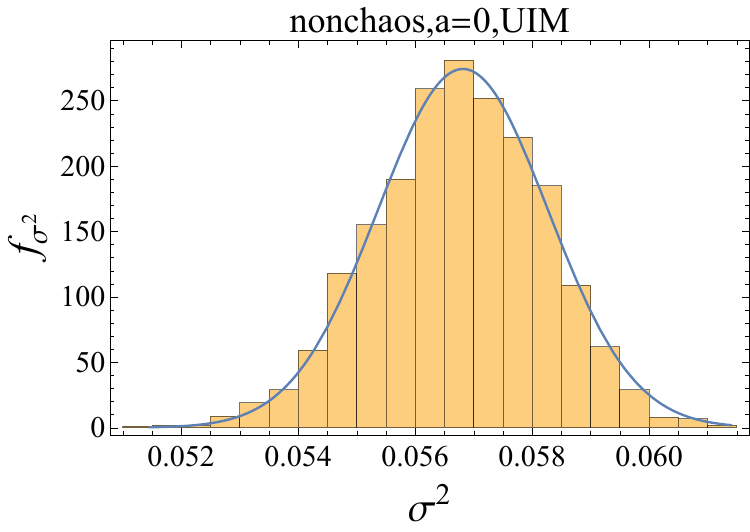}
                 \caption{Nonchaotic case $a=0$  of UIM. Skewness=-0.165951, Kurtosis=3.02278.}
                 \label{fig:uni-fit-nonchaosIM-choice-100}
        \end{subfigure}
        \begin{subfigure}[b]{0.48\textwidth}
                \centering
                \includegraphics[width=\linewidth]{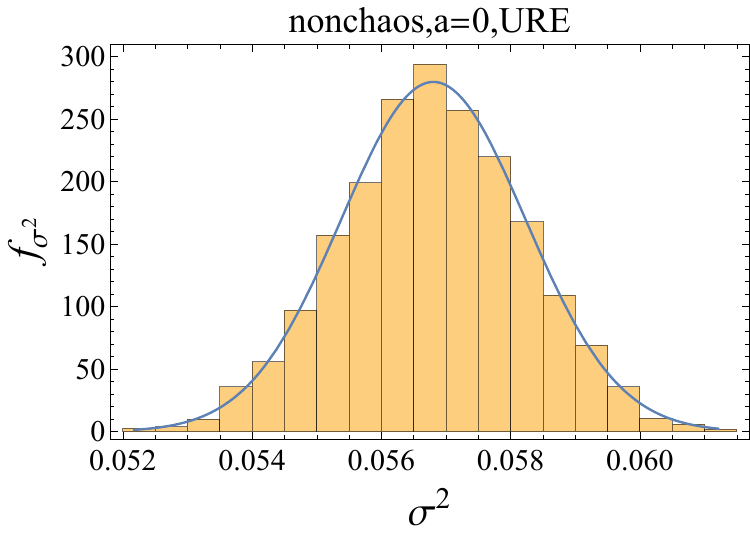}
                 \caption{Nonchaotic case $a=0$  of URE. Skewness=-0.0222373, Kurtosis=2.87503.}
                 \label{fig:uni-fit-nonchaosRE-choice-100}
        \end{subfigure}
           \hfill
        \begin{subfigure}[b]{0.48\textwidth}
                \centering
                \includegraphics[width=\linewidth]{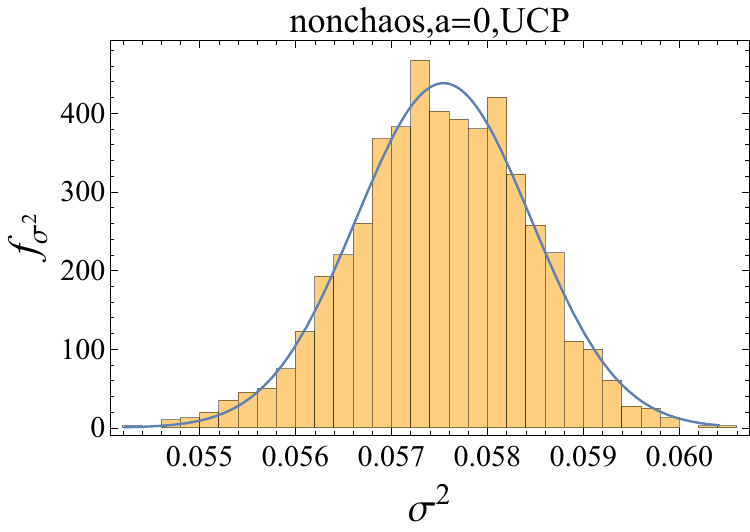}
                 \caption{Nonchaotic case $a=0$  of UCP. Skewness=-0.169296, Kurtosis=3.0196.}
                 \label{fig:uni-fit-nonchaosCP-choice-100}
        \end{subfigure}
        \caption{Fitted normal distribution for uniform distributions, $\mathcal{N}_{max}=100$. The Lanczos coefficients  $\left \{ b_n \right \} $ from step-sizes $1000\le n\le 1500$ are used.}
        \label{fig:billiard-fitted-distribution-uniform-choice-100}
\end{figure}

\clearpage
\bibliographystyle{JHEP}
\bibliography{chaos}

\providecommand{\href}[2]{#2}\begingroup\raggedright\begin{thebibliography}{10}

\bibitem{Parker:2018yvk}
D.E.~Parker, X.~Cao, A.~Avdoshkin, T.~Scaffidi and E.~Altman, \emph{{A
  Universal Operator Growth Hypothesis}},
  \href{https://doi.org/10.1103/PhysRevX.9.041017}{\emph{Phys. Rev. X}
  {\bfseries 9} (2019) 041017}
  [\href{https://arxiv.org/abs/1812.08657}{{\ttfamily 1812.08657}}].

\bibitem{Maldacena:2015waa}
J.~Maldacena, S.H.~Shenker and D.~Stanford, \emph{{A bound on chaos}},
  \href{https://doi.org/10.1007/JHEP08(2016)106}{\emph{JHEP} {\bfseries 08}
  (2016) 106} [\href{https://arxiv.org/abs/1503.01409}{{\ttfamily
  1503.01409}}].

\bibitem{1969JETP28:1200L}
A.I.~{Larkin} and Y.N.~{Ovchinnikov}, \emph{{Quasiclassical Method in the
  Theory of Superconductivity}}, {\emph{Soviet Journal of Experimental and
  Theoretical Physics} {\bfseries 28} (1969) 1200}.

\bibitem{Trigueros:2021rwj}
F.B.~Trigueros and C.-J.~Lin, \emph{{Krylov complexity of many-body
  localization: Operator localization in Krylov basis}},
  \href{https://doi.org/10.21468/SciPostPhys.13.2.037}{\emph{SciPost Phys.}
  {\bfseries 13} (2022) 037}
  [\href{https://arxiv.org/abs/2112.04722}{{\ttfamily 2112.04722}}].

\bibitem{Rabinovici:2021qqt}
E.~Rabinovici, A.~S\'anchez-Garrido, R.~Shir and J.~Sonner, \emph{{Krylov
  localization and suppression of complexity}},
  \href{https://doi.org/10.1007/JHEP03(2022)211}{\emph{JHEP} {\bfseries 03}
  (2022) 211} [\href{https://arxiv.org/abs/2112.12128}{{\ttfamily
  2112.12128}}].

\bibitem{Rabinovici:2022beu}
E.~Rabinovici, A.~S\'anchez-Garrido, R.~Shir and J.~Sonner, \emph{{Krylov
  complexity from integrability to chaos}},
  \href{https://doi.org/10.1007/JHEP07(2022)151}{\emph{JHEP} {\bfseries 07}
  (2022) 151} [\href{https://arxiv.org/abs/2207.07701}{{\ttfamily
  2207.07701}}].

\bibitem{Craps:2024suj}
B.~Craps, O.~Evnin and G.~Pascuzzi, \emph{{Multiseed Krylov complexity}},
  \href{https://arxiv.org/abs/2409.15666}{{\ttfamily 2409.15666}}.

\bibitem{Dyson:1962JMP}
F.J.~{Dyson}, \emph{{Statistical Theory of the Energy Levels of Complex
  Systems. II}}, \href{https://doi.org/10.1063/1.1703774}{\emph{Journal of
  Mathematical Physics} {\bfseries 3} (1962) 157}.

\bibitem{Gutzwiller:1971fy}
M.C.~Gutzwiller, \emph{{Periodic orbits and classical quantization
  conditions}}, \href{https://doi.org/10.1063/1.1665596}{\emph{J. Math. Phys.}
  {\bfseries 12} (1971) 343}.

\bibitem{Berry:1977RSPSA}
M.V.~{Berry} and M.~{Tabor}, \emph{{Level Clustering in the Regular Spectrum}},
  \href{https://doi.org/10.1098/rspa.1977.0140}{\emph{Proceedings of the Royal
  Society of London Series A} {\bfseries 356} (1977) 375}.

\bibitem{Bohigas:1983er}
O.~Bohigas, M.J.~Giannoni and C.~Schmit, \emph{{Characterization of chaotic
  quantum spectra and universality of level fluctuation laws}},
  \href{https://doi.org/10.1103/PhysRevLett.52.1}{\emph{Phys. Rev. Lett.}
  {\bfseries 52} (1984) 1}.

\bibitem{Blumel:1990zz}
R.~Blumel and U.~Smilansky, \emph{{Random-matrix description of chaotic
  scattering: Semiclassical approach}},
  \href{https://doi.org/10.1103/PhysRevLett.64.241}{\emph{Phys. Rev. Lett.}
  {\bfseries 64} (1990) 241}.

\bibitem{Hashimoto:2023swv}
K.~Hashimoto, K.~Murata, N.~Tanahashi and R.~Watanabe, \emph{{Krylov complexity
  and chaos in quantum mechanics}},
  \href{https://doi.org/10.1007/JHEP11(2023)040}{\emph{JHEP} {\bfseries 11}
  (2023) 040} [\href{https://arxiv.org/abs/2305.16669}{{\ttfamily
  2305.16669}}].

\bibitem{BERRY1981163}
M.~Berry, \emph{Quantizing a classically ergodic system: Sinai's billiard and
  the kkr method},
  \href{https://doi.org/https://doi.org/10.1016/0003-4916(81)90189-5}{\emph{Annals
  of Physics} {\bfseries 131} (1981) 163}.

\bibitem{benettin1978numerical}
G.~Benettin and J.-M.~Strelcyn, \emph{Numerical experiments on the free motion
  of a point mass moving in a plane convex region: Stochastic transition and
  entropy}, {\emph{Physical review A} {\bfseries 17} (1978) 773}.

\bibitem{McDonald:1979zz}
S.W.~McDonald and A.N.~Kaufman, \emph{{Spectrum and Eigenfunctions for a
  Hamiltonian with Stochastic Trajectories}},
  \href{https://doi.org/10.1103/PhysRevLett.42.1189}{\emph{Phys. Rev. Lett.}
  {\bfseries 42} (1979) 1189}.

\bibitem{Casati:1980ytd}
G.~Casati, F.~Valz-Gris and I.~Guarnieri, \emph{{On the connection between
  quantization of nonintegrable systems and statistical theory of spectra}},
  \href{https://doi.org/10.1007/BF02798790}{\emph{Lett. Nuovo Cim.} {\bfseries
  28} (1980) 279}.

\bibitem{Weinmann1995h2eOF}
Weinmann, M{\"u}ller-Groeling, Pichard and Frahm, \emph{h/2e oscillations for
  correlated electron pairs in disordered mesoscopic rings.}, {\emph{Physical
  review letters} {\bfseries 75 8} (1995) 1598}.

\bibitem{Nandy:2024htc}
P.~Nandy, A.S.~Matsoukas-Roubeas, P.~Mart\'\i{}nez-Azcona, A.~Dymarsky and
  A.~del Campo, \emph{{Quantum Dynamics in Krylov Space: Methods and
  Applications}},  \href{https://arxiv.org/abs/2405.09628}{{\ttfamily
  2405.09628}}.

\bibitem{Dymarsky:2019elm}
A.~Dymarsky and A.~Gorsky, \emph{{Quantum chaos as delocalization in Krylov
  space}}, \href{https://doi.org/10.1103/PhysRevB.102.085137}{\emph{Phys. Rev.
  B} {\bfseries 102} (2020) 085137}
  [\href{https://arxiv.org/abs/1912.12227}{{\ttfamily 1912.12227}}].

\bibitem{fleishman1977fluctuations}
L.~Fleishman and D.~Licciardello, \emph{Fluctuations and localization in one
  dimension}, {\emph{Journal of Physics C: Solid State Physics} {\bfseries 10}
  (1977) L125}.

\bibitem{Eynard:2015aea}
B.~Eynard, T.~Kimura and S.~Ribault, \emph{{Random matrices}},
  \href{https://arxiv.org/abs/1510.04430}{{\ttfamily 1510.04430}}.

\bibitem{wishart1928generalised}
J.~Wishart, \emph{The generalised product moment distribution in samples from a
  normal multivariate population}, {\emph{Biometrika} (1928) 32}.

\bibitem{mehta1991random}
M.~Mehta, \emph{Random Matrices}, Academic Press (1991).

\bibitem{haake1991quantum}
F.~Haake, \emph{Quantum signatures of chaos}, Springer (1991).

\end{thebibliography}\endgroup

\end{document}